\normalfont\fontsize{11}{15}\bfseries}{\thesubsection}{1em}{}
\newtheorem{rem}{Remark}
\newtheorem{cor}{Corollary}
\newtheorem{prop}{Proposition}
\newtheorem{lem}{Lemma}
\DeclareMathOperator*{\argmax}{arg\,max}
\DeclareMathOperator*{\argmin}{arg\,min}
\DeclareMathOperator{\diag}{diag} 
\newcommand{\R}{\mathbb{R}}
\newcommand{\PP}{\mathbb{P}}
\newcommand{\E}{\mathbb{E}}
\newcommand{\x}{\mathbf{x}}
\newcommand{\f}{\mathbf{f}}
\newcommand{\p}{\mathbf{p}}
\newcommand{\bel}[1]{\begin{equation}\label{#1}}
\newcommand{\be}{\begin{equation}}
\newcommand{\ba}{\begin{eqnarray}}
\newcommand{\ea}{\end{eqnarray}}
\newcommand{\qe}{\end{equation}}
\newcommand{\al}{\alpha}
\newcommand{\ld}{\lambda}
\newcommand{\supp}{\rm{supp}}
\newcommand{\de}{\delta}
\newcommand{\De}{\Delta}
\newcommand{\suml}{\sum\limits}
\newcommand{\eps}{\varepsilon}
\DeclareMathOperator*{\var}{Var}
\DeclareMathOperator*{\cov}{Cov}
\DeclareMathOperator*{\corr}{Cor}
\begin{document}


\title{[Accepted Manuscript]\\
\vspace*{.5cm}
Adaptive Bet-Hedging Revisited: Considerations of Risk and Time Horizon}
\author[a,c]{Omri Tal\footnote{{Corresponding author: omrit1248@gmail.com }
}
}
\author[a,b]{Tat Dat Tran\footnote{trandat@mis.mpg.de \& tran@math-uni.leipzig.de} 
}

\affil[a]{\small Max Planck Institute for Mathematics in the Sciences, Inselstrasse 22, D-04103 Leipzig, Germany} 

\affil[b]{\small Institute of Mathematics, Leipzig University, Augustusplatz 10, D-04109 Leipzig, Germany}

\affil[c]{\small Cohn Institute for the History and Philosophy of Science of Ideas, Tel Aviv University}

\maketitle

\begin{abstract}
{\normalsize
\noindent
Models of adaptive bet-hedging commonly adopt insights from Kelly's famous work on optimal gambling strategies and the financial value of information. In particular, such models seek evolutionary solutions that maximize long term average growth rate of lineages, even in the face of highly stochastic growth trajectories. Here, we argue for extensive departures from the standard approach to better account for evolutionary contingencies. Crucially, we incorporate considerations of volatility minimization, motivated by interim extinction risk in finite populations, within a finite time horizon approach to growth maximization. We find that a game-theoretic competitive-optimality approach best captures these additional constraints, and derive the equilibria solutions under straightforward fitness payoff functions and extinction risks. We show that for both maximal growth and minimal time relative payoffs the log-optimal strategy is a unique pure-strategy symmetric equilibrium, invariant with evolutionary time horizon and robust to low extinction risks.

}
\end{abstract}

\noindent {\em Keywords: adaptive bet-hedging, Kelly gambling, growth optimal portfolio theory, game theory, finite time horizon, extinction risk.}



\section{Introduction}
{\normalsize
\begin{verbatim}
``Adversity has the effect of eliciting talents, which in prosperous circumstances 
would have lain dormant.'' -- Horace (65BC-8BC)
\end{verbatim}
}

\noindent
Kelly's work on optimal gambling strategies and the value of side information was arguably the first convincing attempt at applying concepts from information theory for analysis in a different field \cite{Kelly56}. This work was the precursor to growth-optimal portfolio theory which has extended the basic ideas to the realm of capital markets (\cite{ct06}). There has recently been a resurge of interest in employing insights from optimal gambling theory in models of adaptive bet-hedging under fluctuating environments, where close analogies between the economic and biological setting have been convincingly made apparent (\cite{bergstrom14}; \cite{rl11}; \cite{donaldson10}). 

Biological bet hedging was originally proposed to explain the observation of un-germinated seeds of annual plants (\cite{cohen66}). This strategy involves the variable phenotypic expression of a single genotype, rather than a result of genetic polymorphism, although it is difficult to empirically determine whether observed phenotypic diversity in a population arises from randomization by identical genomes or from an underlying polymorphism (\cite{sb87}). Indeed, evolutionary biologists have long acknowledged that in a stochastically variable environment, natural selection is likely to favor a gene that randomizes its phenotypic expression (\cite{bergstrom14}). Recent work has revealed a variety of potential instances of bet hedging populations: delayed germination in desert winter annual plants that meets postulated criteria of adaptive bet hedging in a variable environment (\cite{gv14}),  bacterial persistence in the presence of antibiotics that appears to constitute an adaptation tuned to the distribution of environmental change (\cite{kussell05}), flowering times in Lobelia inflata which point to flowering being a conservative bet-hedging strategy (\cite{sj03}), or even bet-hedging as a behavioural phenotype, such as the case of nut hoarding in squirrel populations in anticipation of short or long winters (\cite{bergstrom14}).

Notwithstanding these empirical findings, identifying actual cases of adaptive bet hedging in the wild remains elusive. As \cite{sb87} have noted more than three decades ago, it is in general difficult to determine whether observed diversity of behavior in a population arises from randomization by genetically identical individuals or from genetic heterogeneity within co-located individuals optimized for different environmental conditions. Moreover, phenotypic heterogeneity can arise within genetically homogenous populations as a form of specialization in a stable environment through stochastic gene expression, positive feedback loops, or asymmetrical cell division, all processes where bet-hedging is not at play (\cite{rd17}). These difficulties provide further impetus for constructing better and more elaborate models to test against the data.

Of particular note in classic bet hedging models is the adoption from economic theory of asymptotic growth rate optimality as the target function for fitness maximization strategies, where growth in wealth is analogous to growth in lineage size. Indeed, since evolution proceeds by shifting gene frequencies over generations, with frequency changes being multiplicative, long-term fitness is commonly measured by geometric mean fitness across generations (\cite{hopper18}). At the same time, it is also widely acknowledged that long-run growth rate is not a valid measure of fitness under fluctuating environments, such as in the case of bet-hedging populations (\cite{lande07}). 

The resulting intrinsic unpredictability has led some researchers to formulate a probabilistic perspective for natural selection that integrates various effects of uncertainty on natural selection (\cite{yoshimura09}). The applicability of geometric mean fitness has also come into question under finite population models, where the probability of fixation provides additional and sometimes more suitable information than the geometric mean fitness (\cite{pd01}), and in periodically cycling selection regimes, where evolutionary success depends on the length of the cycle and the strength of selection (\cite{Ram18}). Moreover, both gambling and bet-hedging models targeting optimal growth rate implicitly assume an infinite time horizon in formulating the geometric average, and thereby ignore the finiteness of actual horizons over which both economic and evolutionary processes ultimately act. The problem is further amplified when interim extinction risk is taken into account, especially under finite population models. Lineage growth trajectories which are highly stochastic are at risk of large `draw-downs', which may pull the population below some extinction threshold, despite possessing a high asymptotic growth rate. Here we aim to incorporate considerations of finite evolutionary horizons and extinction risk in the search for adaptive optimality in bet hedging models.

\subsection{Background: The standard model}

Most adaptive bet hedging models are largely based on the classic horse-race gambling model associated with Kelly (1956), where the biological counter-part is a lineage apportioning bets on several possible environments. Assume that $k$ horses run in a race, and let horse $X_i$ win with probability $p_i$. If horse $X_i$ wins, the odds are $o_i$ for $1$. A gambler wishes to apportion his bankroll among the horses $0<f_i\le 1$, such that $\sum f_i=1$ and participate in indefinitely repeated races $n\to \infty$. How to best apportion the bankroll each time? In this setting, wealth is a discrete-time stochastic process over $n$ periods, 
\[
W_n=\prod_{i=1}^n W_i(X)
\] 
where $W(X)=f(X)O(X)$ is the random factor by which the gambler's wealth is multiplied when horse X wins. More explicitly,
\[
W_n (f)=\prod_{i=1}^k (f_i o_i )^{H_i}, \quad \text{where, } H\sim Multinomial(n,k,[p_1,...,p_k]).
\]
Kelly first insight was that choosing to simply maximize expected wealth (for any time horizon $n$) gives $\argmax_f E[W_n (f)]=1$, with the implication that one bets everything on a single horse (the one with the highest $p_i$) and a consequent chance of total ruin once that horse loses a race. Therefore, Kelly proposed maximizing the asymptotic growth rate (the rigorous justification provided by \cite{breiman61}). By the law of large numbers random wealth may be expressed as,
\[
W_n (f) \doteq 2^{n E[\log W(X)]}
\]
where,
\[
E[\log W(X)]=\sum_{i=1}^k p_i \log  f_i o_i 
\]
is the asymptotic exponential growth rate. If the gambler stakes his entire wealth each time i.e., $\sum f_i=1$, then
\[
E[\log W(X)]=\sum_{i=1}^k p_i \log o_i -H(p)-D(p||f)
\]
is maximized (convex nonlinear optimization) at ``proportional gambling'' $f=p$ where $D(p\|f)$ is minimized, without regard the actual odds provided by the bookie.

Indeed, the notion of {\it proportional gambling}, made famous by Kelly's treatment, has found its way into classic models of diversified bet hedging. In such models often assumed that ``appropriate phenotypes are produced in proportion to the likelihood of each environment'' (\cite{hopper18}) and that consequently ``the classical bet-hedging prediction [is] that the optimum probability for employing a strategy is approximately equal to the probability that the strategy will be useful'' (\cite{km07}). Here we follow recent approaches that extend the standard model to non-lethal environments via a full fitness matrix, such that this notion is no longer directly applicable.  

\cite{breiman61} was first to show that the Kelly solution is optimal in two convincing ways: [a] that given a Kelly strategy $\phi^*$ and any other ``essentially different'' strategy $\phi$ (not necessarily a fixed fractional betting strategy),
\[
\lim_{n\to \infty}  \frac{W_n (\phi^*)}{W_n (\phi)}= \infty\quad a.s
\]
and [b] that it minimizes the expected time to reach asymptotically large wealth goals. Moreover, this strategy is myopic in the sense that at each iteration of the race one only needs to consider the presently given parameters (\cite{hakansson71}). 
However, Kelly strategies may also yield tremendous drawdowns, a problem widely recognized in the gambling community, such that optimal Kelly is often viewed as ``too risky''; in practice gamblers and investors use `fractional Kelly' which deviates from the optimal solution but reduces the effective variance of the stochastic growth (Fig.~\ref{fig:1}). In the biological framework, this can lead to abrupt extinction events in finite (especially small) populations with highly stochastic lineage growth trajectories. A further complication is that the underlying probability distributions are merely estimated from past data and model assumptions, leading often to over-betting and increased risk (\cite{maclean10}). 
\begin{figure}[h]
\centering\includegraphics[width=8cm]{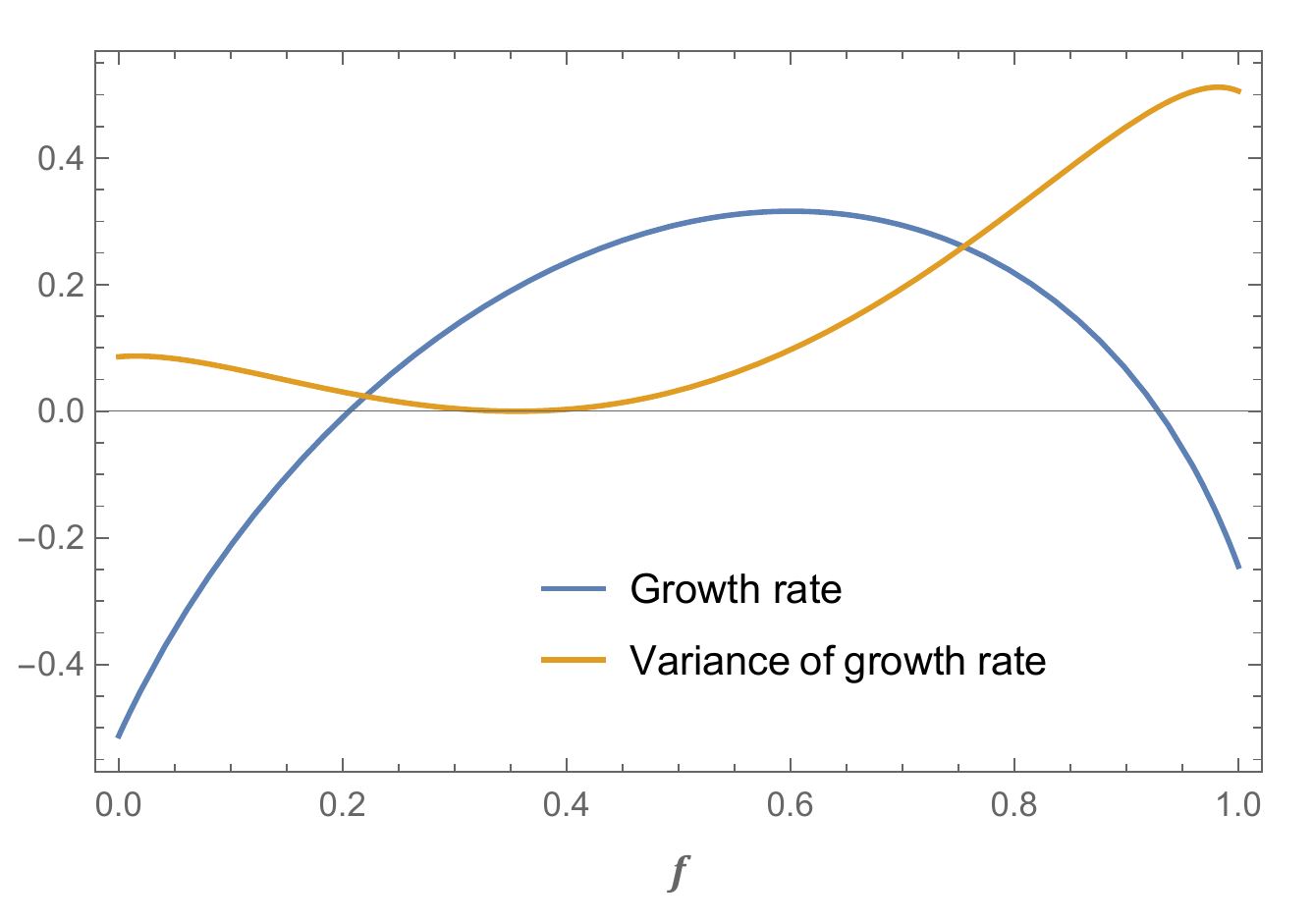}
\caption{{\small The asymptotic exponential growth rate and its (finite-horizon) variance for either the two-horse racing model or the classic bet-hedging model (two environments). Note that the strategy ($f$ on the $x$-axis) that maximizes the growth rate is far from the locus of minimal variance.}}
\label{fig:1}
\end{figure}\\



In this work, we extend the existing models to incorporate both interim extinction risk and finite evolutionary time horizons within a bet hedging framework. This requires re-conceptualizing geometric-mean fitness for such highly stochastic growth scenarios. We ultimately derive fitness functions that better account for such conditions where the fluctuating environment is strongly coupled to both long and short-term growth, and locate optimal stable equilibria.


\section{Methods}

\subsection{The full-fitness matrix model}

\noindent
We assume environments are i.i.d random events across generations, multinomially distributed (with some results generalized to non-identically distributed environments). Individuals within lineages have a static full fitness matrix $[O_{ij}]$ in which nonlethal environments have low but generally non-zero fitness (\cite{donaldson10}; \cite{rl11}). We adopt a finite-population model where lineages start off with some initial population size $W_0$, implicitly assumed higher than some bet-hedging evolutionary threshold (\cite{km07}). Lineages then evolve strategies to randomize individual phenotypes towards maximizing growth across finite horizons in the face of interim extinction threats. 
More formally, with $k$ environments and phenotypes,
\bel{eq:1}
[o_{ij}] :=  
\begin{blockarray}{cccc}
& t_1 & \cdots & t_k \\
\begin{block}{c(ccc)}
  e_1 & o_{11} & \cdots & o_{1k} \\
  \vdots & \vdots & \ddots & \vdots \\
  e_k & o_{k1} & \cdots & o_{kk} \\
  \end{block}
\end{blockarray}
\qe
the general model of lineage growth trajectory across $n$ generations under strategy $f$ is a random process,
\bel{eq:2}
W_n=\prod_{i=1}^k \Big(\sum_{j=1}^k f_j o_{ij}  \Big)^{H_i} 
\qe
 where,
\[
H\sim Multinomial(n,k,[p_1,\ldots,p_k ])
\]
with off-diagonal values reflecting the lower fitness for non-matching environments,
\[
o_{ii}>o_{ij} \ge 0 \text{ and } o_{ii}>1
\]
and where all individuals in a lineage are bet-hedging,
\[
\sum_{i=1}^k f_i =1.
\]

And finally, using a straightforward formulation of the growth rate, $W_n^{1/n}$, a random variable for any finite horizon.

We first derive the asymptotic growth-rate optimal ``Kelly'' solution for this setting ($f^{Kelly}$) with a corresponding bet-hedging region of the environment simplex (Appendix~\ref{app:A}). Relaxing the assumption of i.i.d environments we derive the static Kelly solution for the case of nonstationary environments -- where environments are independent but not identically distributed across generations (Appendix~\ref{app:B}). While under nonstationary environments an optimal growth rate is reached with a dynamic myopic strategy, we focus here on a static strategy since adaptations effectively stabilize across time spans much higher than single generations, such that from evolutionary considerations dynamic strategies are not likely to emerge. Alternative models of fluctuating environments such as Markov chains with underlying switching probabilities (e.g. \cite{xiangyi17}) are not pursued here and left for future work. Finally, we identify a `reference' strategy that admits deterministic growth trajectories, namely the ``Dutch book'' solution (where the variance of the finite-time growth rate is zero) and characterize the consequent loss of growth incurred by exchanging opportunity for certainty (Appendix~\ref{app:C}).

\subsection{Relative fitness payoff function}

\noindent
We now wish to go beyond the standard approach of targeting the optimization of the asymptotic growth rate as undertaken in the previous section -- to incorporate finite evolutionary horizons and extinction risk considerations. For the sake of simplicity, we confine our model here to the case of $k=2$ environments and phenotypes (so that the two environments occur with probability $p$ and $1-p$). To motivate the shift to a finite horizon framework we first highlight an important property of our stochastic growth model, known also in portfolio theory (\cite{markowitz06}). We prove that for any two essentially different strategies, the maximal time $n_0$ one lineage ``dominates'' the other is finite for every realization of lineage trajectory pair (Appendix~\ref{app:D}). The exponentially diminishing histogram of last intersection times of given two growth strategies in Fig.~\ref{fig:2}B (with a single instance of two trajectories for illustration in Fig.~\ref{fig:2}A) demonstrates this phenomenon. \\

\hspace*{.5cm}{\bf\large A} \hspace*{6.5cm} {\bf\large B}
\begin{figure}[htp!]
\centering\includegraphics[width=7cm]{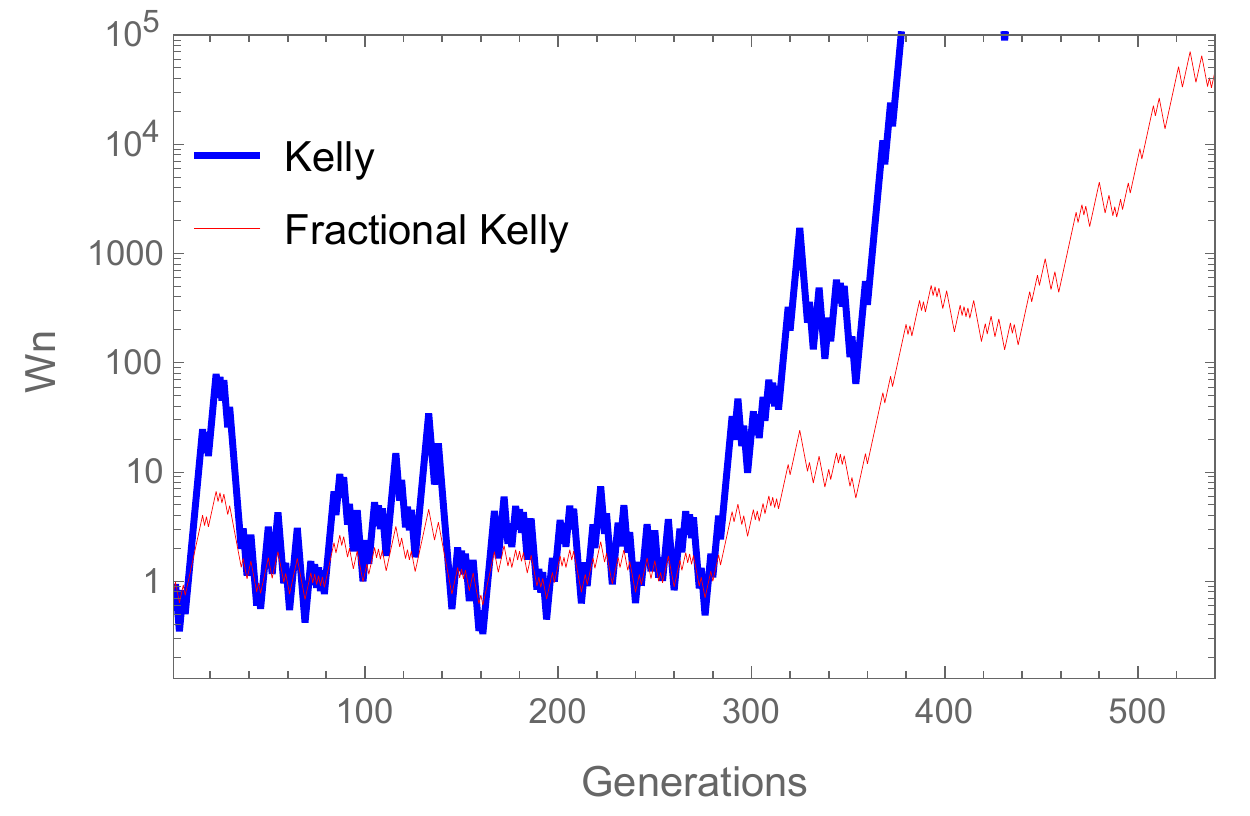}
\centering\includegraphics[width=7cm]{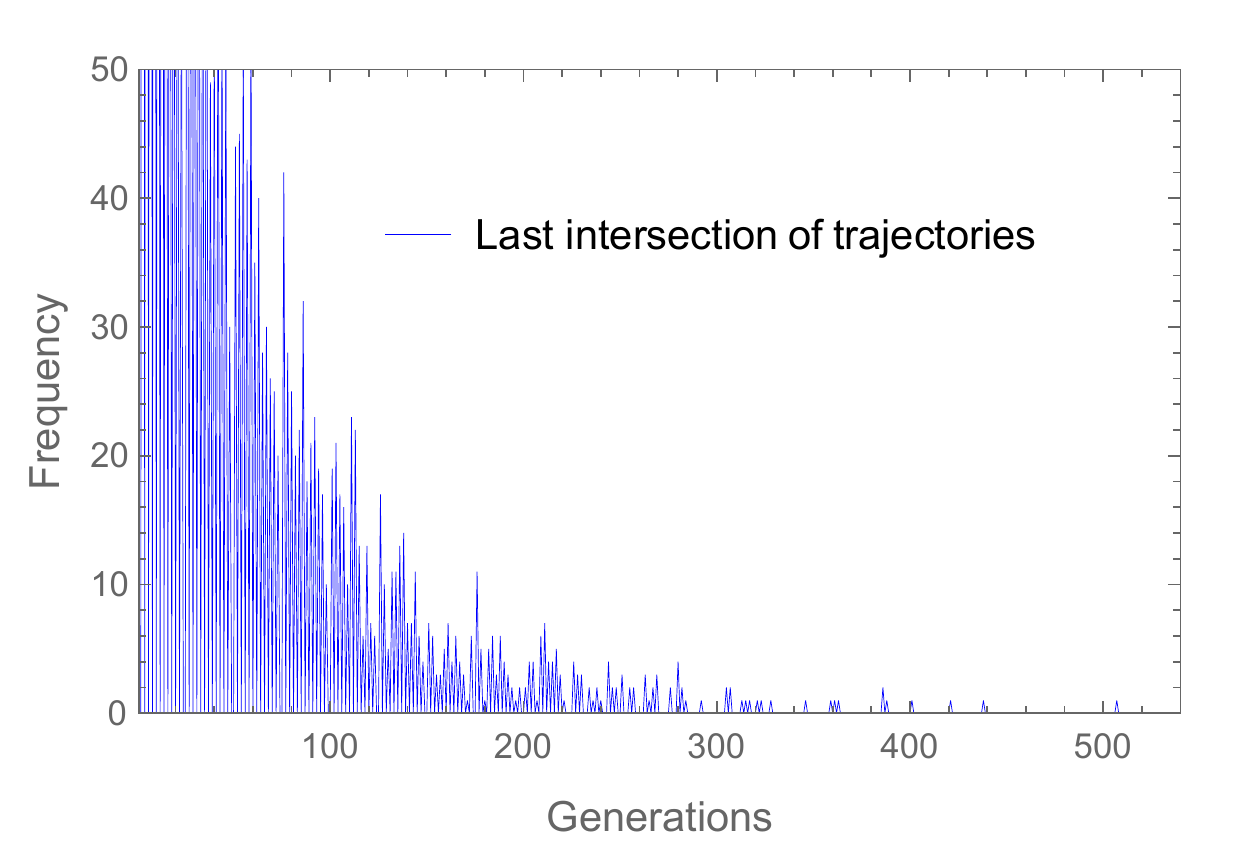}
\caption{{\small Two perspectives on growth trajectories $W_n$ and their last `intersection' point for two lineages with different strategies (a log-optimal Kelly and a suboptimal `fractional Kelly'). A: The optimal Kelly strategy eventually departs from a suboptimal strategy without further intersections, `the last intersection'. | B: A histogram demonstrating that the last intersection of the trajectories of any two growth strategies occurs at some {\it finite} time (here, somewhat above 500 generations).}}
\label{fig:2}
\end{figure}\\  
\vspace{.5cm}

The sustained variance and high skewness of the growth rate distribution under any finite horizon necessitates a comparative approach in formulating a fitness payoff function (in fact, the growth rate is asymptotically log-normal as shown in Appendix~\ref{app:E}). Consider a relative fitness measure for two different lineage strategies $f$ and $g$: the probability that a random trajectory of a lineage with strategy $f$ exceeds the random trajectory of a lineage with strategy $g$ (given time horizon $n$),
\bel{eq:3}
h(f,g)=P(W_n (f)>W_n (g)) 						
\qe
with an induced relation defined by, 
\bel{eq:4}
W_n (f)\ge W_n (g):  P(W_n (f)>W_n (g)) \ge P(W_n (f)<W_n (g)).                                
\qe

We may interpret this probabilistic relation between two strategies as relative fitness. Note that since realizations of $W_n (f)$ and $W_n (g)$ stem from the same underlying stochastic environmental sequence, they will generally be highly correlated (with the corresponding logarithmic growth rates in fact perfectly correlated, as shown in appendix~\ref{app:F}). Consequently, the probability in Eq.~\eqref{eq:3} must be derived from their joint distribution rather than simply from marginal distributions. Fig.~\ref{fig:3} depicts realizations of the log growth rates of $W_n (f)$ and $W_n (g)$ as histogram distributions for some choice of strategies $f$ and $g$, and some finite evolutionary horizon $n$. Asymptotically with time horizon $n$, such distributions approach normality with variance going to zero (Appendix~\ref{app:E}).
\begin{figure}[h]
\centering\includegraphics[width=8cm]{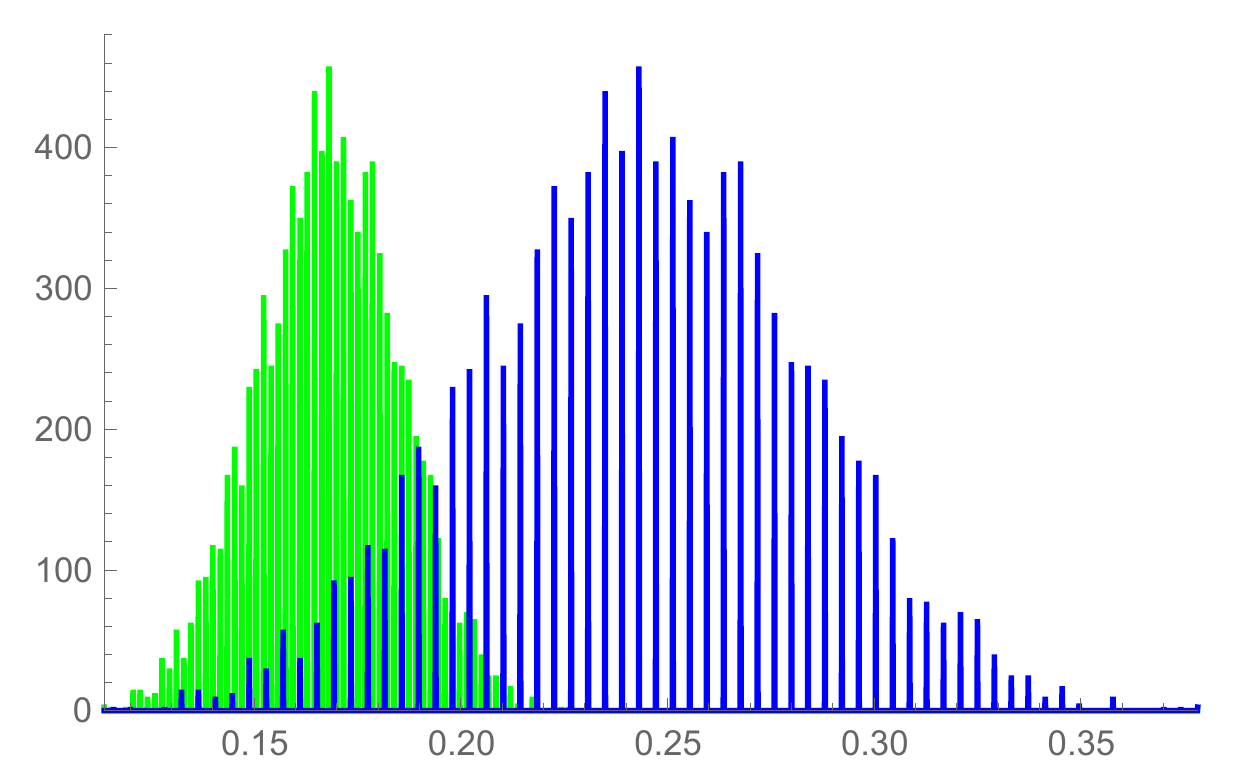}
\caption{{\small The distributions of log growth rates for two competing strategies here illustrated as histograms (the $x$-axis is the log growth rate). In this simple case shown, the strategy generating the higher growth-rates right distribution (blue) is evidently of higher fitness than the one generating the left distribution (green). For clarity, only the marginal distributions are shown (the complete picture is in the joint distribution, due to the correlation between any two growth trajectories that share the same fluctuating environment regime).}}
\label{fig:3}
\end{figure}\\
\vspace{.5cm}

A few properties of the order induced by this relation are worth highlighting. [a] it is a complete order since any two $W_n$ are comparable under the relation, [b] it is transitive for any $n$ and consequently a pre-order, and [c] its maximal element is $W_n^* (f^{Kelly})$, such that both the order induced by $E[\log W_n (f)]$ and the order induced by the payoff $P(W_n (f)>W_n (g))$ form complete preorders and have the same maximal element (Appendix~\ref{app:H}). Despite these beneficial properties, given any `vanilla' strategy $g$ and time horizon $n$, the strategy that maximizes the payoff function, 
\[\argmax_f P(W_n (f)>W_n (g))\]  
will vary as a function of $g$ and $n$ (demonstrated by counterexamples), and in particular will not necessarily be $f^{Kelly}$. This implies that a wildtype lineage with strategy $g$ different from $f^{Kelly}$ will eventually be overtaken by some mutant invasive lineage with a strategy that maximizes this payoff function, a process that may potentially remain in recurrent flux, with invasive lineages replacing a wildtype lineage.

\subsection{Competitive optimality with risk}

\noindent
To see whether evolutionary stable optima may also emerge we develop a game-theoretic approach. Players are lineages with particular bet-hedging strategies and random initial population size. Lineages interact by competing over a common niche subject to the same environmental fluctuations. This set-up is in some contrast to more standard evolutionary game theory settings, where agents are organisms rather than lineages and where the notion of an iterated strategy is prominent, but maintains the central aspect of interactions formalized in a payoff function (e.g., \cite{sn18}). A lineage survives the competitive encounter by avoiding extinction (defined in what follows) while exceeding its opponent in size over a given time horizon. This outcome is determined by a game-theoretic deterministic payoff function, modified from Eq.~\eqref{eq:3} to incorporate an extinction threshold and randomized initial lineage size. Ultimately, we are searching for Nash equilibria.

This approach is motivated by the classic work on time-invariant game-theoretic competitive optimality, within the scope of growth-optimal portfolio theory (\cite{bc80, bc88}). Bell and Cover consider a competitive setting for a stock portfolio model under any finite number of investment periods and prove that for any relative wealth payoff $E[\phi(UW_1/VW_2)]$ and portfolio wealth $W_1$ and $W_2$, there are conditions on the function $\phi$ such that the log-optimal Kelly portfolio is a solution to the game, given initial randomizations $U$ and $V$ (independent and of equal expectation). In particular, $\phi (x)=\chi_{[1,\infty)}(x)$ results in the payoff $\PP(UW_1\le VW_2)$ with the log-optimal portfolio as a game-theoretic solution, given some initial fair randomizations. This additional fair randomization reduces the effect of small differences in end wealth, thus avoiding unwanted cases where the optimal strategy is beat by a small amount most of the time (\cite{ct06}).

\subsection{The payoff function in a game-theoretic setting}

\noindent
For any time-horizon $n$ and extinction threshold $d$, we define a (deterministic) payoff function: the probability that a random trajectory of a lineage with strategy $f$ exceeds the random trajectory of a lineage with strategy $g$ without first going extinct (given time horizon $n$),   
\bel{eq:5}
\begin{split}
M_n (f,g)&=P(u_0 W_n (f)>v_0 W_n (g) | \text{ extinction level } d) :=\\
& P(u_0 W_n (f)>v_0 W_n (g) \wedge W_i,V_i>d, i:1,\ldots,n)+\\
&                                             P(u_0 W_n (f)>d, i:1,\ldots,n \wedge v_0 W_i (f)\le d \text{ some } i)                                          
\end{split}
\qe
with initial population size independent randomizations $u_0$ and $v_0$, independent and of same mean but possibly of a different distribution class. 

This payoff function induces a symmetric discrete-valued non constant-sum game setting, although it is conceptually ``zero-sum'' $M_n(f,g)+M_n(g,f)<1$ (Appendix~\ref{app:J}). Crucially, our payoff matrix is finite since it reflects the finitely many strategies possible in a finite population model -- there can only be $N$ different sized partitions of a population of size $N$ in betting on two environments (under $k=2$ environments and phenotypes). A low-resolution toy-model instance of the payoff matrix is depicted in Fig.~\ref{fig:4}. 
\begin{figure}[h]
\centering\includegraphics[width=6cm]{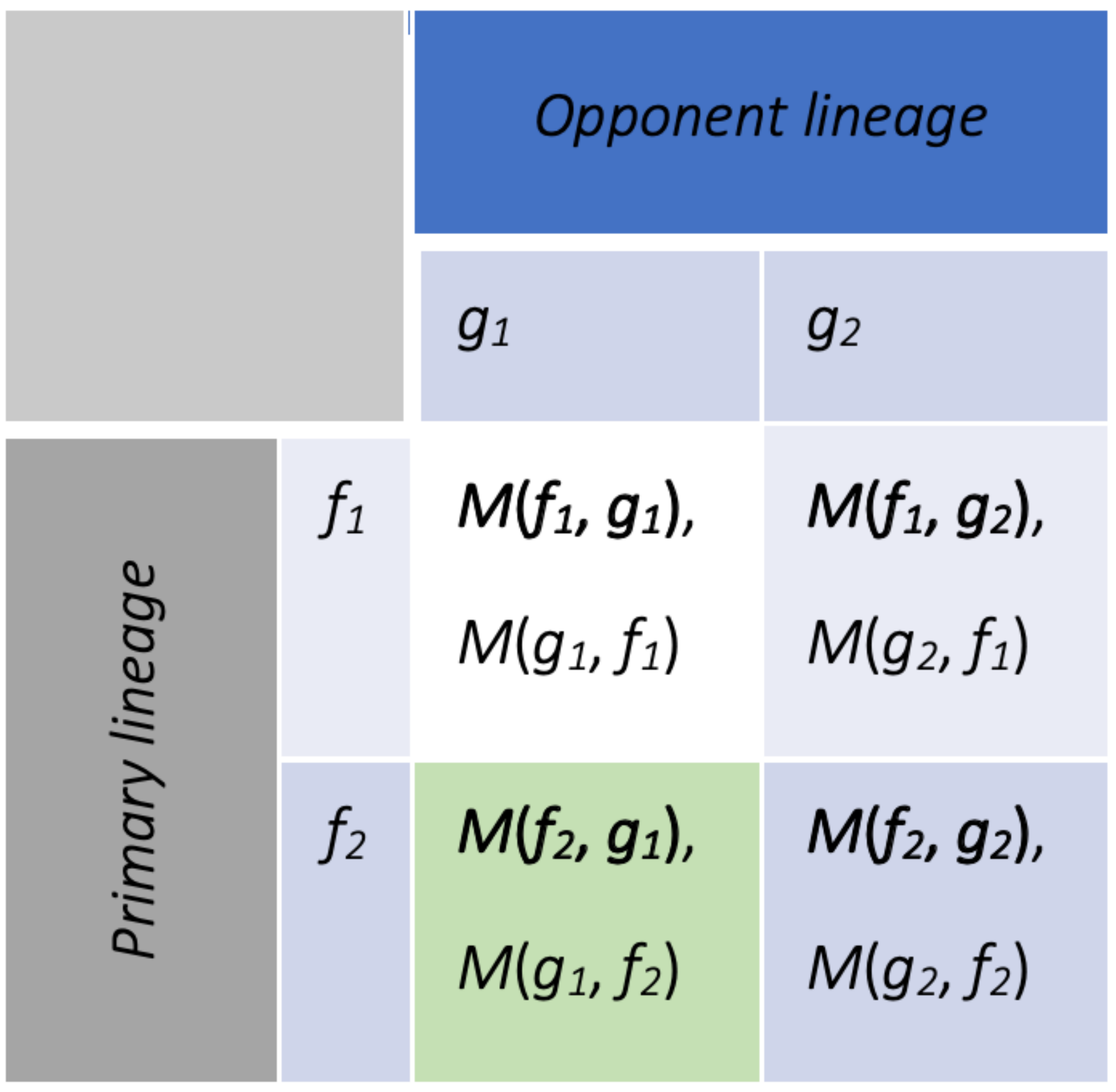}
\caption{{\small An example of a ($2 \times 2$ strategies) evolutionary payoff matrix for a game of two lineages, with primary lineage payoff in bold.}}
\label{fig:4}
\end{figure}
\vspace{.5cm}

Our goal would be to identify pure strategy Nash equilibria reflecting the evolutionary solutions to competitive bet-hedging. In particular, we would like to explore the conditions under which a bet-hedging setting admits a symmetric equilibrium and whether it is unique. In Appendix~\ref{app:K} we prove that for an infinite-size payoff matrix (i.e., continuous strategies) the log-optimal strategy is the solution to this game, invariant with the choice of time horizon. Moreover, any finite matrix representing the $N$ strategies possible for a lineage of finite size $N$ necessarily also admits a solution, as illustrated in Fig.~\ref{fig:5}. This solution is the strategy closest to the log-optimal strategy under the finite resolution framework, such that it converges to it asymptotically with $N$ (Appendix~\ref{app:N}). Finally, under a nonstationary environment model the log-optimal strategy again emerges as the equilibrium static strategy -- even given short time horizons (Appendix~\ref{app:O}). \\

\noindent
\hspace*{.5cm}{\bf\large A} \hspace*{3.5cm} {\bf\large B}\hspace*{3cm} {\bf\large C}\hspace*{3.2cm} {\bf\large D}
\begin{figure}[h]
\centering\includegraphics[width=3.5cm]{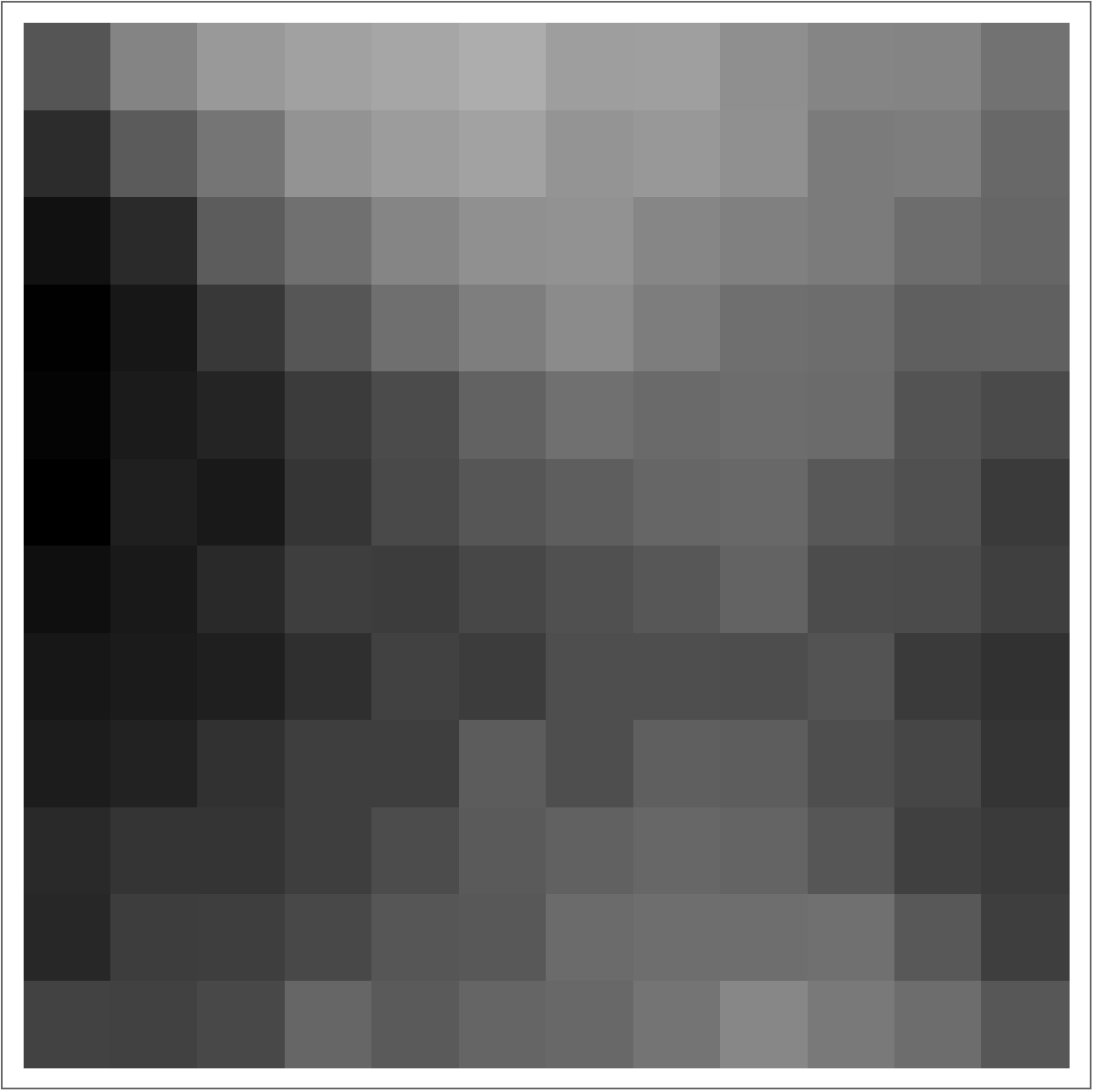}
\centering\includegraphics[width=3.5cm]{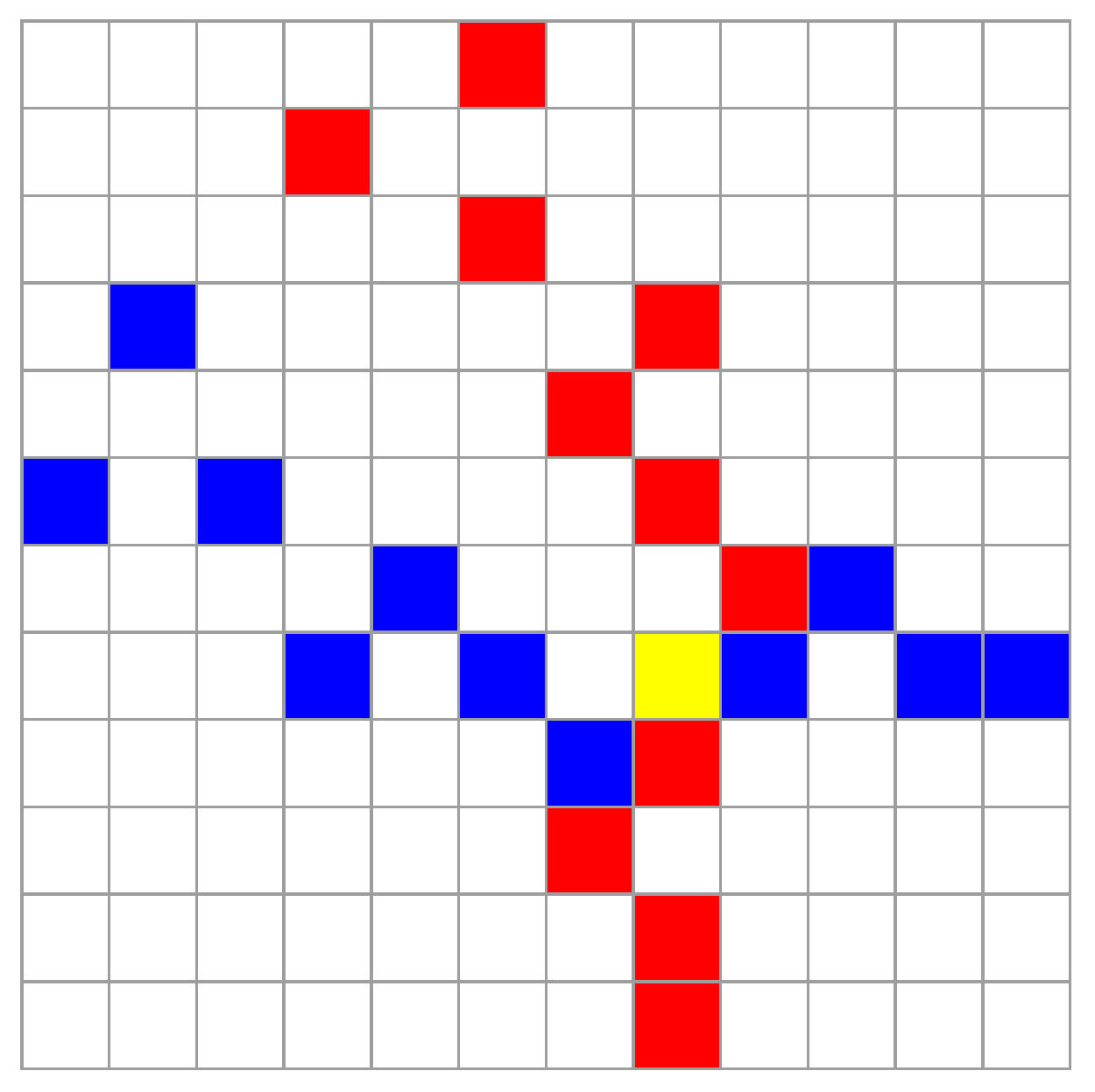}
\centering\includegraphics[width=3.5cm]{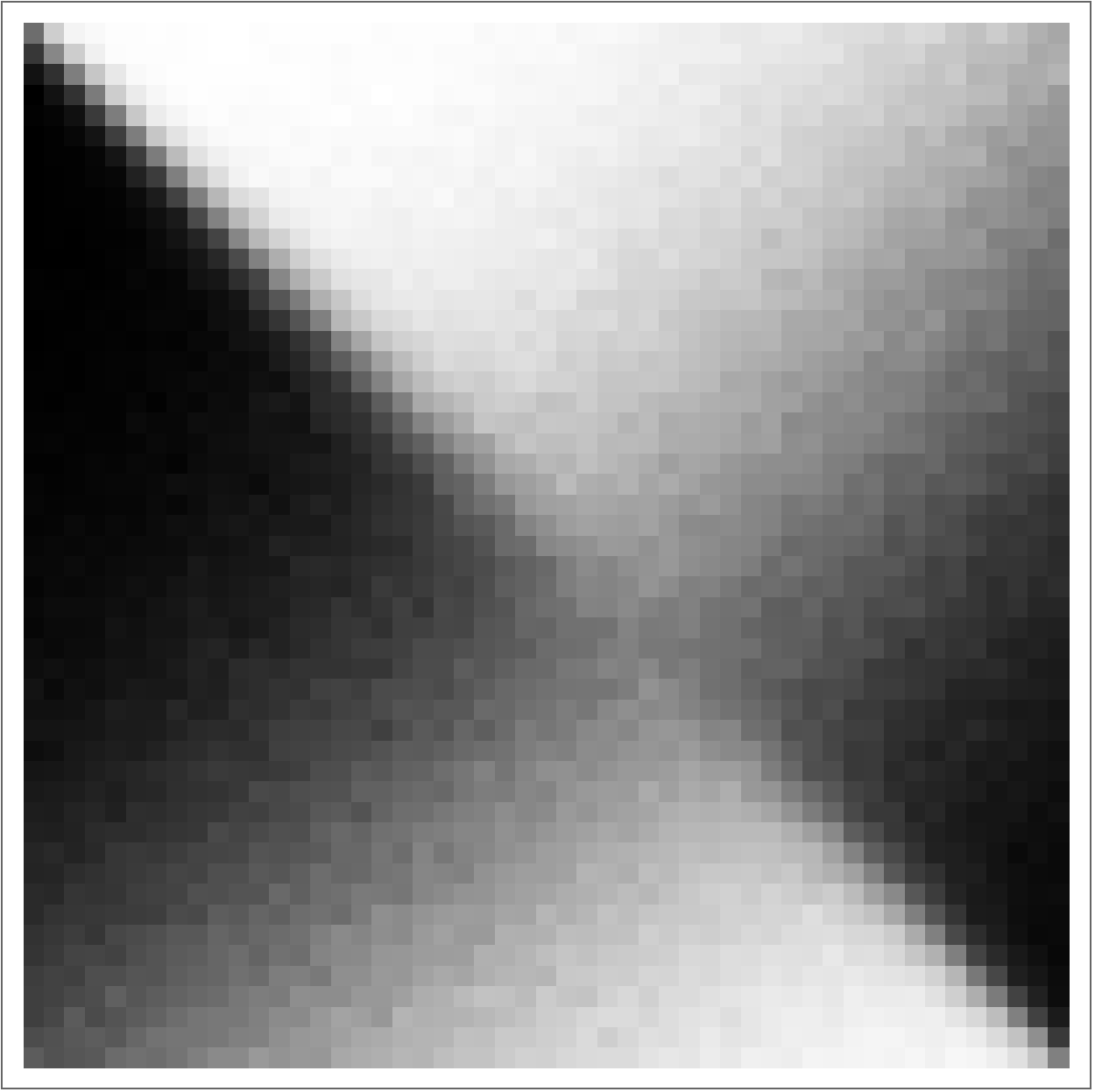}
\centering\includegraphics[width=3.5cm]{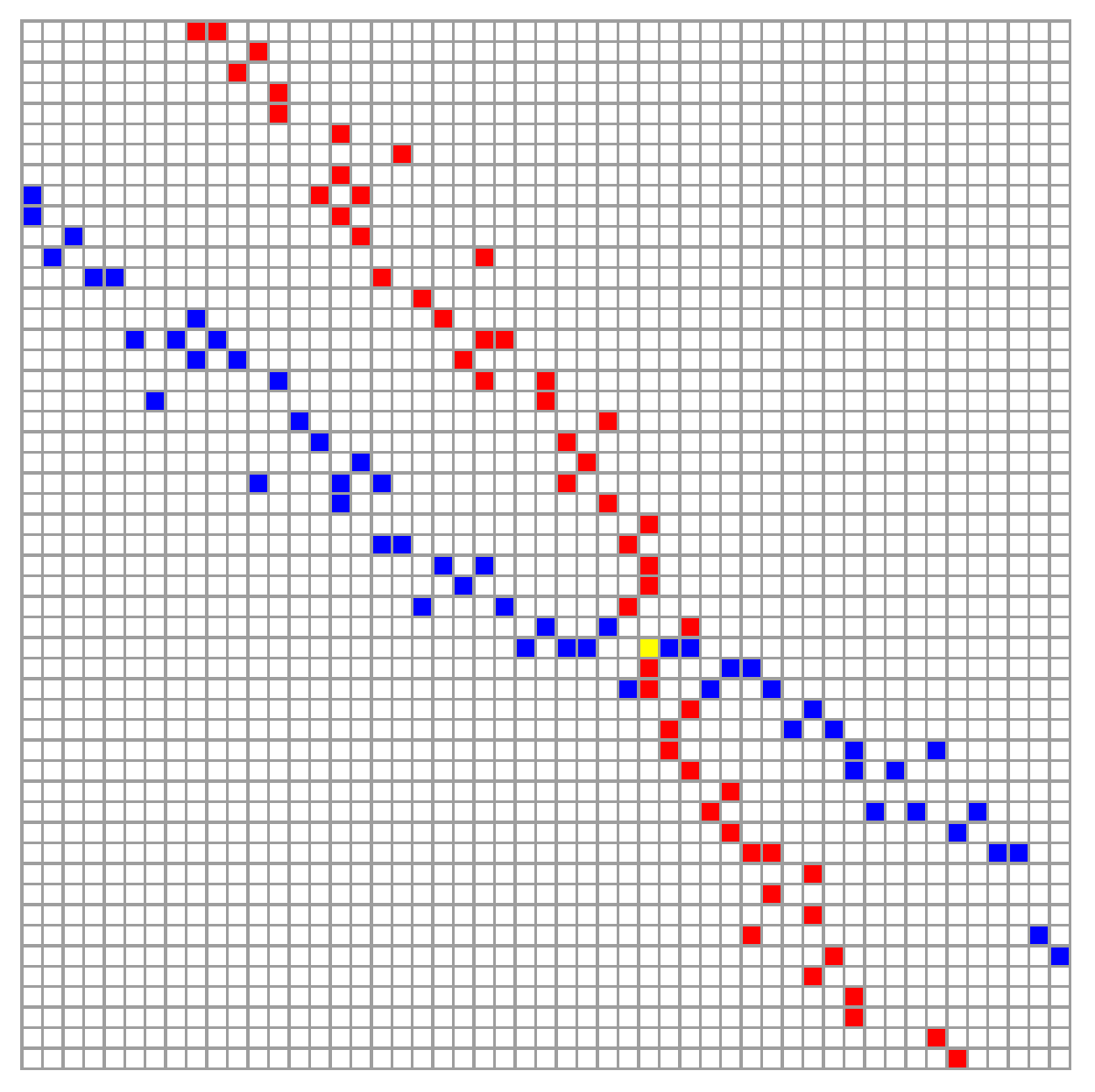}
\caption{{\small An example of payoff matrix simulations (figs. A and C: darker cells represent higher probabilities) and the resulting Nash equilibria in the maximal element matrices (figs. B and D: blue cells for column-maximal primary player payoffs, red for row-maximal opponent payoffs and yellow for symmetric Nash equilibria). Only portions of the matrices around the equilibrium are shown. A+B: Low resolution matrices corresponding to a small population with limited strategies | C+D: Higher resolution matrices, which correspond to a larger population and subsequently higher number of strategies (multiple maxima in some adjacent cells of panel D is an effect due to a combination of using finite runs in the simulation of the payoff function, such that computed probabilities are rational values, along with high resolution in the range of strategies). The model uses a $2\times 2$ fitness matrix: $[o_{11}=3.0;o_{12}=0.2;o_{22}=1.8;o_{21}=0.1]$ with $p=0.594$ and a resulting log-optimal strategy $f=0.60$).}}
\label{fig:5}
\end{figure}\\

The effect of lineage-size extinction thresholds on actual rates of extinction of random growth trajectories is illustrated in Fig.~\ref{fig:6}A. As would be expected, higher thresholds of extinction correspond to higher probabilities of extinction, with extinction rates that converge quickly to asymptotic values (Appendix~\ref{app:P}). Numerical simulations indicate that when incorporating low extinction thresholds that result in low extinction rates, the symmetric Nash equilibrium remains stable at the log-optimal strategy. Higher thresholds may result in a number of scenarios: a shift of the symmetric equilibrium away from the log-optimal solution, complete lack of equilibrium solution, or the emergence of multiple symmetric equilibria; in conjunction, multiple pairs of off-diagonal equilibria may appear (see Fig.~\ref{fig:6}B for one such scenario).\\

\vspace*{.7cm}
\hspace*{1.5cm}{\bf\large A} \hspace*{7.5cm} {\bf\large B}
\begin{figure}[h]
\centering\includegraphics[width=9cm]{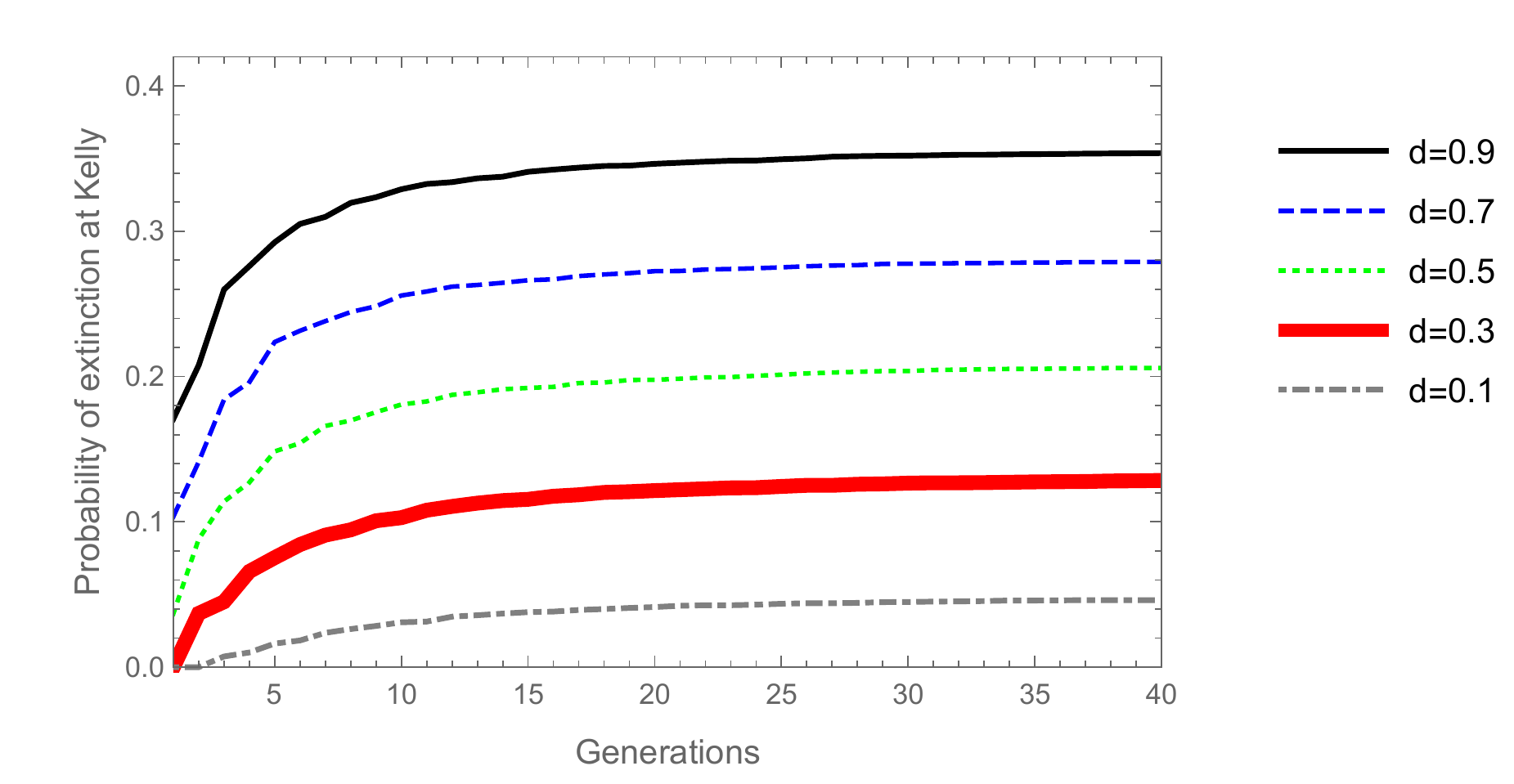}
\centering\includegraphics[width=4.5cm]{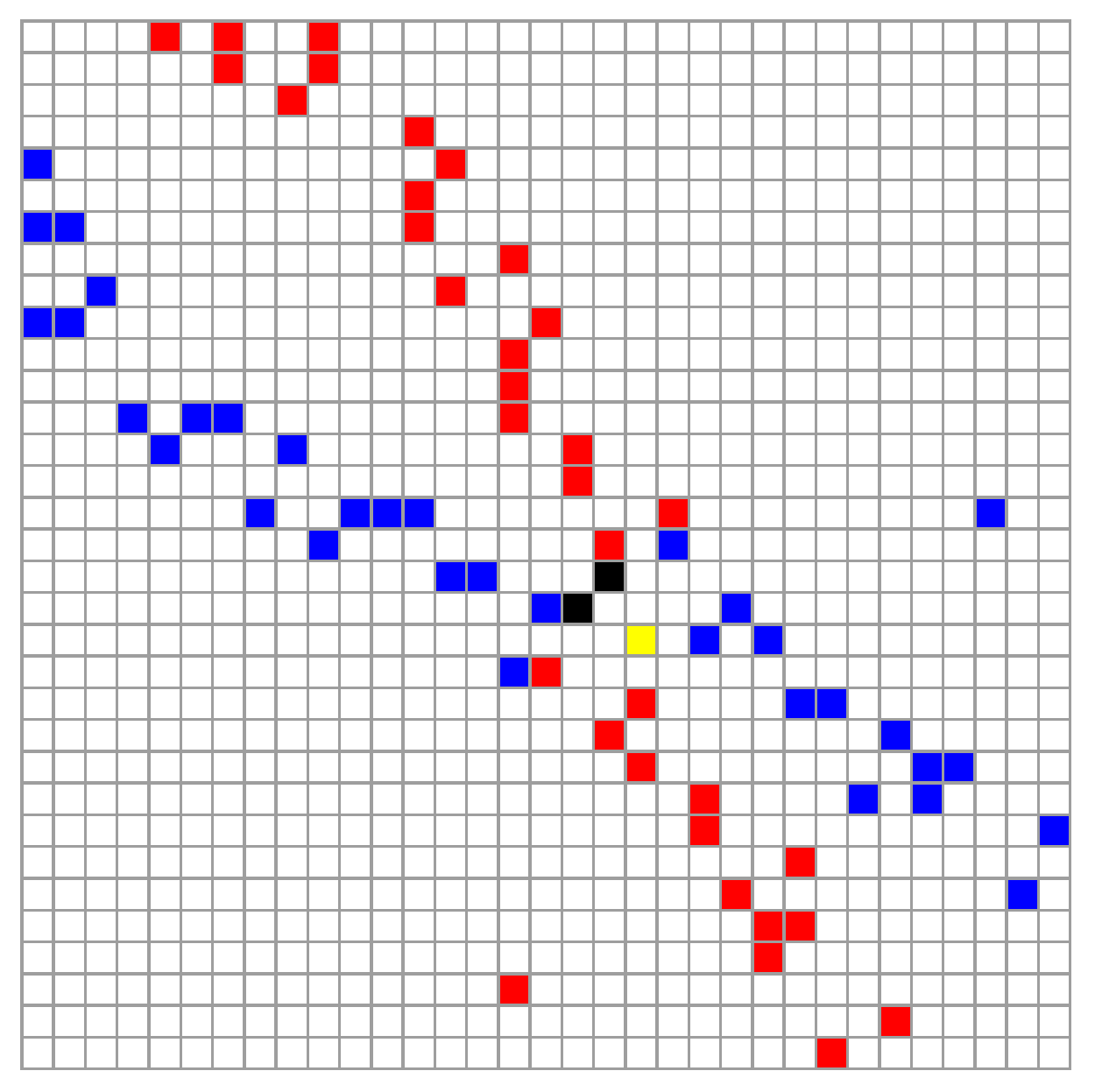}
\caption{{\small A: The accumulative probability of extinction for log-optimal (Kelly) strategies for different lineage-size extinction thresholds. | B: An instance of the maximal element matrix of a payoff matrix (a portion around the solutions) resulting from introducing an extinction threshold: two off-diagonal Nash equilibria (black) along with a symmetric Nash equilibrium shifted from the log-optimal strategy (yellow). The simulations use a $2 \times 2$ fitness matrix: $[o_{11}=2.5;o_{12}=0.2;o_{22}=1.4;o_{21}=0.1]$ with $p=0.75$, and with $n=60$ in the simulation of panel B.}}
\label{fig:6}
\end{figure}\\

\subsection{Minimum time to reach a population threshold size}

\noindent
To gain further perspective on optimal strategies under highly stochastic growth we consider evolutionary competition between lineages, where survival is determined by reaching a certain threshold of lineage size in minimal time (e.g. for $K-$selected species, see \cite{reznick02}). In effect, the lineage with growth characteristics that minimize the time to reach a certain population size threshold ``wins'', a setting with potential relevance in the context of competitively colonizing a limited niche, as in range expansion scenarios (see \cite{martine19} for a bet-hedging population expanding into an unoccupied space). We follow the classic results of \cite{breiman61} on the log-optimal portfolio as the optimal strategy minimizing the expected time to reach an asymptotic target wealth, but instead of an infinite target we base the fitness payoff function on finite targets. Initial insight into the effect of strategy choice on the consequent distributions of minimal time (Fig.~\ref{fig:7}A) is provided by comparing their expectation, where the optimality of Kelly is already apparent (Fig.~\ref{fig:7}B).\\

\noindent
\hspace*{1cm}{\bf\large A} \hspace*{6.5cm} {\bf\large B}
\begin{figure}[h]
\centering\includegraphics[width=7cm]{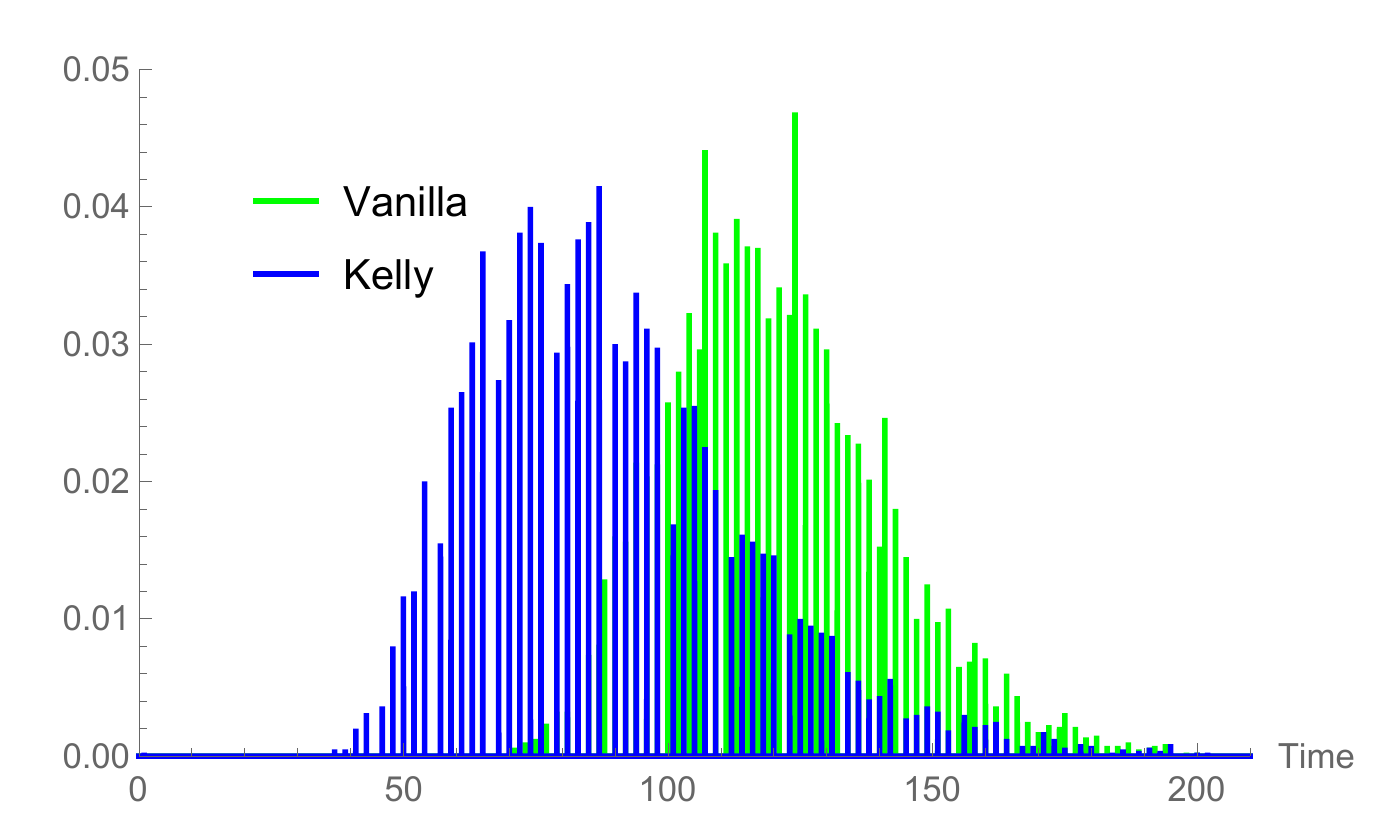}
\centering\includegraphics[width=7cm]{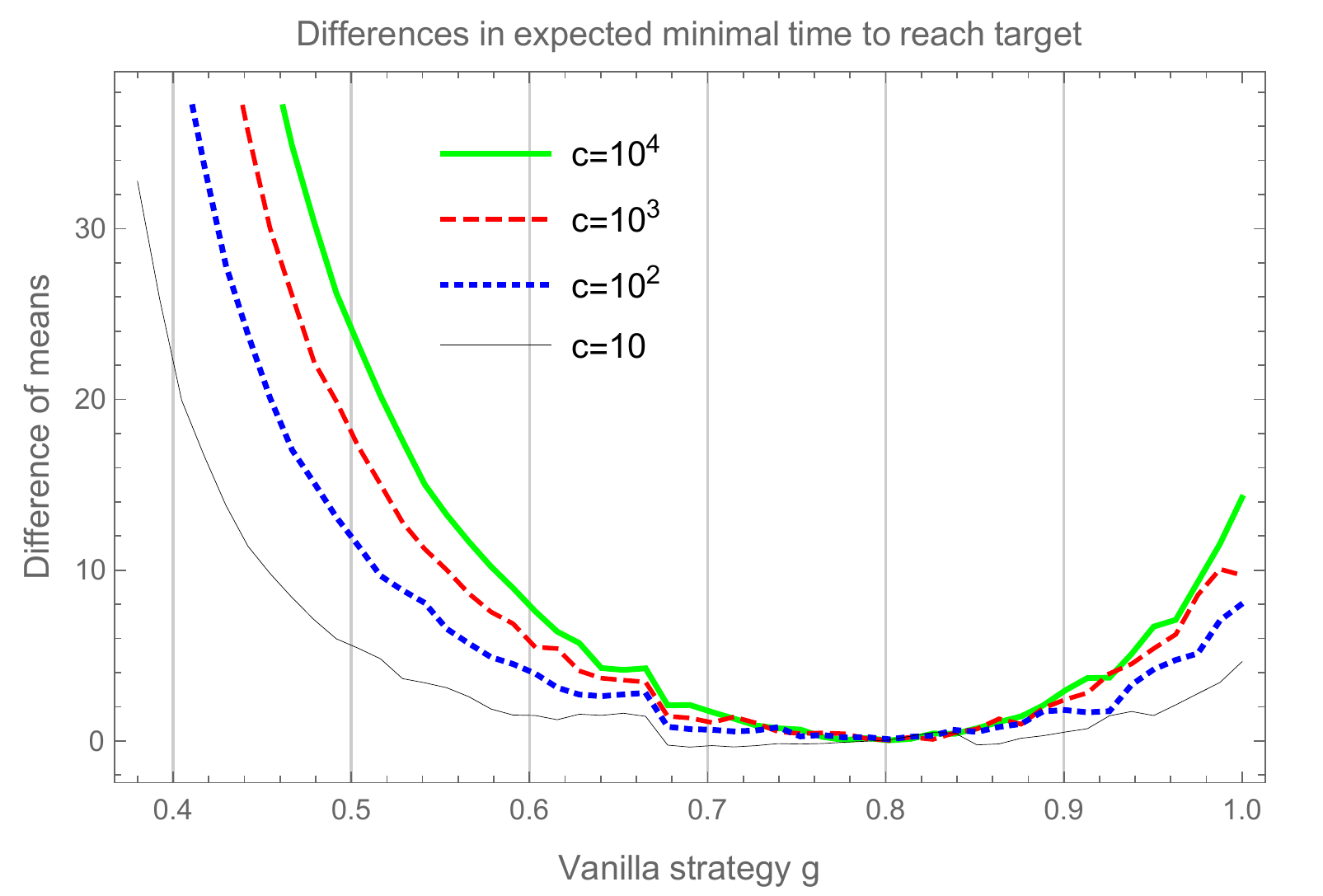}
\caption{{\small A: An example of the marginal distributions of the minimal time to reach a certain population size threshold, for two strategies | B: A simulation of $(E[T(g,c)]-E[T(f^*,c)])$ for several target values $c$, under the range of vanilla strategies $g$ corresponding to positive asymptotic growth rates, where $f^*=0.8$.}}
\label{fig:7}
\end{figure}\\  

Instead of considering expectations of (highly correlated) minimal time distributions, we devise a more informative fitness payoff function based on the joint distribution. Crucially, this payoff will naturally be amenable to a game-theoretic approach, in line with the type of analysis in the previous section with payoff $M_n (f,g)$. As before, we condition the probability on avoiding an extinction threshold. The payoff captures the probability that a trajectory following strategy $f$ reaches threshold $c$ before a trajectory following strategy $g$, conditioned on avoiding an extinction threshold $d$. If both trajectories reach $c$ at the same time (since time is in discrete generations) then the one which overshoots with a greater margin above $c$ `wins'. Denote by $T(f,c)$ the minimal time distribution given strategy $f$ and target lineage size $c$, 
\[
\begin{split}
M_c (f,g)=P(T(f,c)<T(g,c)  | \text{ extinction level } d).     
\end{split}	
\]
More precisely, we denote new trajectories $\{W^E_k\}_{k=1}^n$ by 
\[
W_0^E \in U[a,b],
\]
and for all $k=0,\ldots, n-1$
\[
\quad W^E_{k+1} =  \begin{cases}W^E_{k} \overline{o}_1(f)^{x_{k+1}} \overline{o}_2(f)^{1-x_{k+1}}, \text{ if } W^E_{k} \ge E\\
0, \text{ if } W^E_{k} < E.
\end{cases}
\]
We denote also by
\[
T(f,c) := \min\{n: W^E_n(f) \ge c\} 
\]
the first time when the trajectory $\{W^E_k\}_{k=0}^n$ cut the threshold $c$. $T(f,C) = \infty$ if and only if this trajectory does not cut the threshold.

Then the payoff matrix $M_c(f,g)$ is defined by
\bel{eq:6}
\begin{split}
M_c(f,g) &:= \PP(T(f,C) < T(g,C) | \text{given that at least one of them is finite})\\
&\quad + \PP(T(f,C) = T(g,C), W^E_{T(f,C)} > V^E_{T(g,C)} | \text{given that at least one of them is finite}).
\end{split}
\qe

We then identify pure strategy Nash equilibria reflecting the evolutionary solutions with the new relative payoff $M_c (f,g)$. In Appendix~\ref{app:L} we prove that again Kelly is the solution to the game, invariant to the evolutionary `choice' of target population size $c$, and that under a nonstationary environment regime Kelly emerges as the static equilibrium strategy. Finally, we highlight a deep mathematical link of this probabilistic perspective for minimal time optimality to the competitive optimality setting with payoff $M_n (f,g)$. Formally, $M_c(f,g)$ can be rewritten as a convex linear combination of $M_n(f,g)$: $M_c(f,g) = \sum_{n=0}^{\infty} P( W_0 W_n(f) > V_0 W_n(g),   T(f,c) = n)$ (see Appendix~\ref{app:L} for more details).

\section{Discussion}

\noindent
In this work we provide further support for the robustness of the expected log criterion as an optimality solution for biological bet hedging. We develop a game-theoretic framework inherently invariant to the span of evolutionary horizons while incorporating considerations of interim extinction risk, and use multiple optimality criteria to strengthen our results. This approach goes beyond standard models of bet-hedging, which focus on indefinite `long-term' growth rates and that ignore accounting for interim risk. Previous work generally upholds that ``phenotypes with the greatest long-term average growth rate will dominate the entire population'' as ``the basic principle'' used in optimization (\cite{yj96}), or that a proxy for the likely outcome of evolution is ``to think of organisms as maximizing the long-term growth rate of their lineage'' (\cite{donaldson10}). 

Nevertheless, some authors have recently acknowledged the importance of accounting for finite time horizons. For instance, \cite{rl11} note in passing that in their model ``the growth rate emerges as a unique measure of fitness when considering the long-term limit $T\to \infty$, but, if considering a finite ``horizon'', there may be a different strategy that outperforms [it]''. Indeed, as some evolutionists have argued, short-term fitness measures are also needed to achieve a full understanding of how evolution works in variable environments, as geometric mean fitness concerns the long-run evolutionary outcome (\cite{okasha18}).
Moreover, long-term fitness metrics are typically formulated without regard to transient short-term population dynamics, in which lineages might come close to extinction. Under more inclusive models with extinction, selection in a fluctuating environment can also favor bet-hedging strategies that ultimately increase the risk of extinction (\cite{lr19}). Given such considerations, the benefit of explicitly incorporating extinction considerations in stochastic growth models is clearly evident.

We have opted to focus on symmetric Nash equilibria rather than evolutionary stable strategies (ESS), which are strategies that cannot be beaten if the fraction of the rival invading mutants in the population is sufficiently small, and are generally invoked in settings with iterative match-ups between individuals rather than lineages (\cite{sp73}). Since the payoff in our game theoretic setting pits one lineage against another (two different strategies) there is no explicit sense of invading mutants  (but see \cite{olofsson09} for an ESS approach to bet-hedging). Moreover, some of the classic aspects of Nash's theorem do not directly apply within our setting. The theorem states that for every two-person zero-sum game with finitely many strategies there exists a mixed strategy that solves the game (\cite{nash51}). While our framework is indeed ``two-person'' it is not zero-sum and has finitely many strategies. Crucially, since an implicit goal of theoretical work such as ours may be towards predicting which strategies are likely to evolve, we focus on pure strategies rather than mixed ones, where the uniqueness of the equilibrium solution emerges as especially beneficial (echoing the classic approach of growth rate log-optimality where there is always a unique solution due to convexity).

We are not the first to attempt to model the expected minimal time to reach a finite asymptotic target, an extension of the seminal result of \cite{breiman61} on properties of the log optimal portfolio. \cite{aucamp77}) derived the first such analysis, given some basic assumptions that concern reaching a wealth target exactly vs. ``overshooting'' it. More recently, \cite{kp10} find that in a continuous time or asset price model where a finite target can be exactly reached with no overshooting, the Kelly solution is still optimal; in a discrete time model Kelly is only approximately optimal, but if ``time rebates'' are introduced (to compensate overshooting the goal in the last investment period) it becomes exactly optimal. While these results on the expectation of the time distribution are in line with our analysis of stochastic lineage growth optimality, we obtain an even stronger result: given {\it finite} population size targets, the log-optimal strategy emerges as a Nash equilibrium under a payoff function based on the {\it joint} distribution of minimal time trajectories.

Interestingly, \cite{Kelly56} has anticipated the application of his ideas in biological bet hedging, writing ``Although the model adopted here is drawn from the real-life situation of gambling it is possible that it could apply to certain other economic situations\ldots  the essential requirements for the validity of the theory are the possibility of reinvestment of profits and the ability to control or vary the amount of money invested or bet in different categories.'' It does not require a leap of the imagination to notice analogies of ``economic situations'' to evolutionary strategies, of ``reinvestment of profits'' to biological reproduction and growth, and of the ``control'' of invested money to evolved adaptative optimality. Of course, it best appreciated with Shannon's famous ``bandwagon'' warning in mind, cautioning over hasty attempts to apply insights from information theory to other fields (\cite{shannon56}).

\subsection{Other approaches to optimization under finite horizon and risk}

\noindent
A seemingly straightforward way of introducing finite (albeit still arbitrary) horizons into optimization settings is by considering the expectation of a finite-horizon growth rate. This is the approach adopted in some recent stock portfolio models for finite horizons (\cite{vz13}; \cite{morgan15}). Within our formalism from Eq.~\eqref{eq:2}, this amounts to finding,
\[
\argmax_f E\Big[W_n(f)^{\frac{1}{n}}\Big] = \argmax_f \Bigg(\sum_{i=1}^k \Big(\sum_{j=1}^k f_j o_{ij}  \Big)^{\frac{1}{n}} p_i \Bigg)^n.
\]
However, this implicitly assumes some arbitrary utility function, in this case the $n$-th root, the maximization of which requiring some justification. In contrast, Kelly's focus on $\argmax_f E[\log W_n]$ while implicitly assumes logarithmic utility, is equivalent the limit of the above expression, and leads to desired optimality properties as famously laid out by \cite{breiman61}.  

A more convincing approach to maximizing wealth with risk management over finite horizons was proposed in \cite{rujeerapaiboon15} for portfolio construction. The authors consider the optimization of a minimum bound for finite-horizon growth,
\[
 \argmax_f \Big\{\argmax_c \PP\Big(\frac{1}{n} \log W_n\ge c\Big)\ge 1-\eps\Big\}
 \]
 with a degree of freedom corresponding roughly to a risk-aversion or a choice of certainty parameter. 
 
The expression above allows deriving the portfolio giving the highest minimum bound for wealth for any level of certainty $\eps$. While choosing a particular horizon $n$ and a risk-aversion parameter is perfectly sensible in an investment setting, the translation to the biological framework is problematic: what would be evolution's risk aversion in this setting? Or the appropriate time horizon for optimization? Any choice of these two parameters would inescapably be arbitrary in nature. In an alternative approach \cite{rujeerapaiboon18} reformulate the Kelly gambling setting in terms of the Conservative Expected Value (CEV), a risk-averse expectation for highly skewed distributions. This amounts essentially to devising a systematic way of constructing fractional Kelly strategies such that it is strongly coupled with the infimum of the finite-horizon growth rate. Here again, there is an implicit arbitrariness in the choice of horizon length if applied in the context of an evolutionary framework, which we seek to avoid.

Other authors have focused on incorporating risk to the standard Kelly gambling setting with an infinite time horizon. For instance, \cite{busseti16} develop a systematic way to trade off growth rate and drawdown risk by formulating a risk-constrained Kelly gambling problem within the standard setting of growth rate maximization under asymptotic horizons. The additional risk constraint limits the probability of a drawdown to a specified level. Nevertheless, for our purposes, percentage drawdown is arguably not a natural metric for representing lineage extinction risks, as compared with explicit extinction thresholds, especially in scenarios of competing finite-size populations (\cite{ashby17}). Still other approaches may seek to target risk minimization as a primary criterion. In an approach akin to our Dutch book analysis, \cite{wolf05} minimize the growth rate variance and consequently the probability of extinction due to `unlucky' environmental trajectories. However, this is at the inevitable expense of achieving high stochastic growth rates, a vital aspect of evolutionary fitness.

\subsection{Game-theoretic competitive optimality of Bell and Cover}

\noindent
The results presented here can also be seen as both a special case and an extension of the classic results of \cite{bc80, bc88}. There are several important distinctions: [a] their setting is formulated for continuous random variables whereas our environments are discrete events, [b] their payoff implies a zero-sum game whereas our game is non zero-sum (more accurately, non-constant-sum) due to the effect of extinctions, and [c] their payoff function is a straightforward probability while our payoff is effectively a conditional probability (includes considerations of extinction risk). Moreover, implicit in Bell and Cover's setting is an infinitely sized payoff matrix, whereas our payoff matrix is finite since it reflects a finite number of strategies possible in a finite population. These distinctions have enabled us to show that, at least given the particular payoff function and discrete framework, the emerging symmetric Nash equilibrium is in fact a strict and unique one. 

Some authors have generalized or utilized other aspects of the classic competitive optimality results. Most recently, \cite{garivaltis18} has shown that discrete-time results of \cite{bc88} hold equally well for continuous-time rebalanced portfolios in a competitive setting between two investors, each aiming to maximize the expected ratio of one's own wealth to the other. In an original use of evolutionary ideas in finance, \cite{lo17} and \cite{orr17} consider a payoff function capturing relative wealth of two competing investors each with some set initial wealth, focusing on finite-period analysis. They analyze optimal strategies of a primary player against a given `vanilla' strategy, a framework consistent with our initial relative payoff non-game-theoretic setting. They find that the particular vanilla strategy chosen plays an important role in the optimal allocation, in conjunction with initial wealth of both players.

Finally, our game-theoretic analysis may hint at a solution to a ``coincidence'' pointed out in \cite{bc80}. They were left perplexed as to why competitive optimality for a finite horizon turned out, by ``coincidence'', to have the same solution (namely, Kelly) as in the growth-optimal portfolio: ``Finally, it is tantalizing that $b^*$ arises as the solution to such dissimilar problems [\ldots] The underlying for this coincidence will be investigated''. Their follow-up 1988 paper suggests a ``possible reason for the robustness of log optimal portfolios'' or why ``log optimal portfolios behave well in the competitive investment game'': namely that the wealth generated from any portfolio is always within ``fair reach'' of the wealth from the log-optimal portfolio. Indeed, the Kuhn-Tucker conditions and the consequent bound on the wealth ratio (\cite[Theorem 16.2.2]{ct06}) already imply that game-theoretic optimality is the driving force behind the asymptotic dominance. Fair randomization of initial wealth then leads to the game-theoretic solution for any increasing function of the wealth ratio. Our investigation of the payoff matrix suggests another perspective to this ``coincidence''. Asymptotically with horizon $n$, the payoff matrix becomes maximally `contrasted', with off-diagonal cells converging to probabilities of $0$ or $1$ (except those on `fault lines'), such that the Nash equilibrium emerges naturally. In effect, the `saddle-point' equilibrium, which has been established as invariant with $n$, asymptotically attains maximum curvature (Appendix~\ref{app:M}).

\section{Conclusion}

\noindent
In this work we have argued that under fluctuating environments and trait randomization geometric mean fitness should also encompass considerations of stochastic growth and extinction risk under finite evolutionary horizons. We show that for both the relative maximal growth payoff and the relative minimal time payoff there is a unique pure-strategy symmetric equilibrium, which is invariant with evolutionary time horizon and robust to low extinction risk. Coinciding with the classic bet-hedging modeling approach, this is the Kelly log-optimal strategy. With higher thresholds of extinction, the equilibrium may shift away from Kelly and possibly branch out to multiple equilibria. Future work will be required to generalize the model to competitive optimality payoffs beyond pairwise lineages, Markovian environmental sequential transitions, random fitness matrices, and to more precisely capture the effect of high extinction thresholds on the optimal evolutionary solutions.

\section*{Acknowledgements}

\noindent
We'd like to thank Alex Garivaltis for illuminating discussions on competitive optimality and two diligent reviewers for their very insightful comments. We also appreciate the continued support of J\"urgen Jost and the Max Planck Institute (MIS). OT would like to further acknowledge the generous support of the Complexity Institute at NTU Singapore and Peter MA Sloot. TDT would also like to thank VIASM for financial support and hospitality in his two-month visiting in 2019.

\newpage
\begin{appendices}
\counterwithin{figure}{section}

\small

\section{The Kelly solution to the full fitness matrix model}\label{app:A}
In this section, we derive the Kelly (log optimal) solution for the full-fitness matrix model. 

\noindent
{\bf The case $k=2$:} We have
\[W_n(f) = (o_{11}f+o_{12}(1-f))^H (o_{21}f+o_{22}(1-f))^{n-H}\]
where $H\sim Binomial(n,p)$. The Kelly solution is then defined by  
\[
f^{Kelly} := \argmax_{f\in [0,1]} G(f) 
\]
where $G(f) := \lim_{n\to \infty}  W_n^{\frac{1}{n}}(f)$. \\
By denoting $\overline{o}_1(f):=o_{11}f+o_{12}(1-f), \overline{o}_2(f):=o_{21}f+o_{22}(1-f)$, we have
\[
\begin{split}
\lim_{n\to \infty}\frac{1}{n} \log W_n(f) &= \lim_{n\to \infty} \Bigg(\frac{H}{n} \log \overline{o}_1(f) + (1-\frac{H}{n}) \log \overline{o}_2(f)\Bigg)= p \log \overline{o}_1(f) + (1-p) \log \overline{o}_2(f).
\end{split}
\]
Therefore, by directed calculations, we obtain the Kelly solution which is dependent on $p$
\[
f^{Kelly}(p) = \begin{cases}
0,& \text{ if } p\in [0,p_{-}]\\ 
\frac{(1-p) o_{12}}{o_{12}-o_{11}} + \frac{p o_{22}}{o_{22}-o_{21}},& \text{ if } p\in [p_{-}, p_{+}]\\
1,& \text{ if } p\in [p_{+},1],
\end{cases}
\]
where $p_{-} = \frac{o_{12}(o_{22}-o_{21})}{\De}$ and $p_{+}= \frac{o_{11}(o_{22}-o_{21})}{\De}$,  
and the corresponding optimal value is
\[
G(f^{Kelly})
= \begin{cases}
o_{12}^p  o_{22}^{1-p},& \text{ if } p\in [0,p_{-}]\\ 
\Big(\frac{p(o_{11}o_{22}-o_{12}o_{21})}{o_{22}-o_{21}}\Big)^p \Big(\frac{(1-p)(o_{11}o_{22}-o_{12}o_{21})}{o_{11}-o_{12}}\Big)^{1-p},& \text{ if } p\in [p_{-}, p_{+}]\\
o_{11}^p  o_{21}^{1-p},& \text{ if } p\in [p_{+},1].
\end{cases}
\]

\noindent
{\bf The case general $k$:}
By directed calculations, we obtain
\[
G(\f) =  \prod_{i=1}^k \overline{o}_i(\f)^{p_i}
 \] 
where $\overline{o}_i(\f) := \sum_{j=1}^k o_{ij} f_j$. This implies that for each $\p \in \De_{k-1}:=\{(x_1,\ldots,x_k)  \in [0,1]^k \text{ such that } x_1+\cdots+x_k =1\}$, $G(\f)$ is a continuous strict convex function in the compact convex domain $\De_{k-1}$. Therefore there will always exist a unique Kelly solution $f^{Kelly} \in \De_{k-1}$ which is dependent on $\p$. 

\noindent
{\it Remark:}
\begin{enumerate}
\item[(i)] If the fitness matrix is diagonal, i.e., $(o_{ij}) = \diag\{o_1,\ldots,o_k\}$, then $(f^{Kelly})_i = p_i$;  
\item[(ii)] $f^{Kelly}$ solves the system 
\[
\sum_{i=1}^k \frac{p_i o_{ij}}{\overline{o}_i(\f)} = 1, \quad \forall   j=1,\ldots, k.
\]
\end{enumerate}


\section{The solution to nonstationary environments}\label{app:B}
We model the environment probabilities on a parameterized Beta distribution, such that $p \sim B(\al,\beta)$, and prove that the Kelly solution (a static $f$ that maximizes the asymptotic growth rate) in the asymptotic framework corresponds to the solution of the i.i.d. environment case with a probability equaling the expectation of the Beta distribution. \\
For sake of simplicity, we consider only $k=2$. We have
$
W_n(f) = \overline{o}_1(f)^H \overline{o}_2(f)^{n-H} 
$,
where $H\sim GB\big(n,\{p_1,\ldots, p_n\}\sim Beta(\al,\beta)\big)$, i.e., $H=\eps_1+\cdots+\eps_n$ with $\eps_r \sim Bernoulli(p_r)$ and $p_r \sim Beta(\al, \beta)$. Using the law of large numbers, we have
\[
\begin{split}
G(f) &= \lim_{n\to \infty} \Bigg(\frac{\sum_{i=1}^n \eps_i}{n} \log \overline{o}_1(f)  + \Big(1- \frac{\sum_{i=1}^n \eps_i}{n}\Big)  \log \overline{o}_2(f)\Bigg)=\lim_{n\to \infty} \Bigg(\frac{\sum_{i=1}^n \E \eps_i}{n} \log \overline{o}_1(f)  + \Big(1- \frac{\sum_{i=1}^n \E\eps_i}{n}\Big)  \log \overline{o}_2(f)\Bigg)\\
&=\lim_{n\to \infty}\Bigg(\Big(\frac{1}{n}\sum_{r=1}^n p_r\Big) \log \overline{o}_1(f) + \Big(1-\frac{1}{n}\sum_{r=1}^n p_r\Big)\log \overline{o}_2(f)\Bigg)= p \log \overline{o}_1(f)+ (1-p)\log \overline{o}_2(f) \quad \text{ a.s.}
\end{split}
\]
where $p=\lim_{n\to \infty}\frac{1}{n}\sum_{r=1}^n p_r= \frac{\al}{\al+\beta}$ is the expectation of the Beta distribution. Thus, the Kelly solution in this case is the same as the previous case.
 
\section{The Dutch book solution and the corresponding loss of growth}\label{app:C}
In this section we derive the Dutch book solution for our model. By definition, the Dutch book solution $f^D$ satisfies $\overline{o}_1(f) = \overline{o}_2(f) = \cdots= \overline{o}_k(f)$ with the positive growth, i.e., $\overline{o}_1(f)>1$. 

\vspace{.2cm}
\noindent
{\bf The case $k=2$:}
The Dutch book solution satisfies \[
o_{11}f + o_{12}(1-f) = o_{21}f + o_{22}(1-f) > 1.
\]
Therefore, if $\De := o_{11}o_{22} - o_{12}o_{21} > o_{11}+o_{22}-o_{12}-o_{21}$ then we always have a unique Dutch book solution $f^D$ 
\[
f^D= \frac{o_{22}-o_{12}}{o_{22}-o_{12}+o_{11}-o_{21}}
\]
and
\[
G(f^D) = \frac{\De}{o_{22}-o_{12}+o_{11}-o_{21}} > 1 \quad \text{ (which does not depend on $p$).}
\]


\vspace{.2cm}
\noindent
{\bf The general case $k$:}
We give out here some criteria to have a unique Dutch book solution in the general case $k$.
\begin{lem}
Given a fitness matrix $O = (o_{i,j})_{i,j=1}^k$. Denote by $\al_{i,j} = o_{i,j}-o_{k,j}$ for all $j=1,\ldots,k$ and $i=1,\ldots,k-1$. Denote by $\Lambda = (\Lambda_{i,j})_{i,j=1}^k$ such that
\[
\begin{pmatrix}
\al_{1,1} & \cdots & \al_{1,k-1}& \al_{1,k} \\
\vdots & \ddots & \vdots & \vdots\\
\al_{k-1,1} & \cdots & \al_{k-1,k-1}& \al_{k-1,k} \\
1 & \cdots & 1 & 1 
\end{pmatrix}
\begin{pmatrix}
\Lambda_{1,1} & \cdots & \Lambda_{1,k} \\
\vdots & \ddots & \vdots\\
\Lambda_{k,1} & \cdots &  \Lambda_{k,k} 
\end{pmatrix}=
I_k.
\]
If this fitness matrix $O$ satisfies 
\begin{enumerate}
\item[(i)] $o_{ii} > o_{ji}\ge 0$ for all $i,j = 1,\ldots, k$
\item[(ii)] $\Lambda_{i,k} > 0$ for all $i=1,\ldots,k$
\item[(iii)] $\suml_{j=1}^k o_{i,j}\Lambda_{j,k} > 1$ for all $i=1,\ldots,k-1$
\end{enumerate}
then there exists a Dutch book solution defined by 
$
f^D_j =\Lambda_{j,k}, \quad j=1,\ldots,k
$
and the corresponding deterministic wealth is
\[
G(\f^D) = \prod_{i=1}^k \Big(\suml_{j=1}^k o_{i,j}f^D_j\Big)^{p_i} = \suml_{j=1}^k o_{i,j}\Lambda_{j,k} .
\]
\end{lem}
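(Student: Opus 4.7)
The plan is to recognize the Dutch book conditions as an inhomogeneous linear system whose coefficient matrix is precisely the one inverted in the definition of $\Lambda$, and then read the candidate solution off its $k$-th column. The Dutch book requirement $\overline{o}_1(\f)=\cdots=\overline{o}_k(\f)$ is equivalent to the $k-1$ equations $\overline{o}_i(\f)-\overline{o}_k(\f)=\sum_{j=1}^k(o_{ij}-o_{kj})f_j=\sum_{j=1}^k\alpha_{i,j}f_j=0$ for $i=1,\ldots,k-1$. Appending the normalization $\sum_j f_j=1$ gives the system $M\f=e_k$, where $M$ is exactly the $k\times k$ matrix on the left-hand side of the definition of $\Lambda$ and $e_k=(0,\ldots,0,1)^T$. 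Since $M\Lambda=I_k$ (and $M$ is therefore invertible with $\Lambda$ as its two-sided inverse), the unique solution is $\f^D=\Lambda e_k$, i.e.\ $f^D_j=\Lambda_{j,k}$ for $j=1,\ldots,k$.

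Next I would verify that this candidate is an admissible strategy satisfying the positive-growth criterion. Non-negativity $f^D_j\ge 0$ is precisely hypothesis (ii). The identity $\sum_j f^D_j=1$ follows from the last row of $M\Lambda=I_k$: reading off the $(k,k)$-entry gives $\sum_{j=1}^k 1\cdot \Lambda_{j,k}=1$. By construction $\overline{o}_1(\f^D)=\cdots=\overline{o}_k(\f^D)$, and hypothesis (iii) asserts $\overline{o}_i(\f^D)=\sum_j o_{ij}\Lambda_{j,k}>1$ for $i=1,\ldots,k-1$; by the Dutch book equality this also forces $\overline{o}_k(\f^D)>1$. Hypothesis (i) guarantees that the resulting equal value is achieved within the fitness regime of the model (the diagonal-dominance condition of the paper) and implicitly supports the invertibility of $M$ used above.

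Finally, the deterministic wealth is immediate: writing $c:=\overline{o}_1(\f^D)=\cdots=\overline{o}_k(\f^D)$, one has
\[
G(\f^D)=\prod_{i=1}^k\overline{o}_i(\f^D)^{p_i}=c^{\sum_{i=1}^k p_i}=c=\sum_{j=1}^k o_{i,j}\Lambda_{j,k}
\]
for any (hence every) index $i$, matching the claimed formula; note that the right-hand side is in fact independent of $i$ precisely because $\f^D$ is a Dutch book, so the $p_i$ play no role, paralleling the $k=2$ case already computed.

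The main conceptual obstacle is not computational but notational: one must correctly interpret the block structure of the defining equation for $\Lambda$, identify that $\f^D$ corresponds to the $k$-th column rather than the $k$-th row of $\Lambda$, and check that the implicit invertibility assumption (embedded in stating that $\Lambda$ exists) is consistent with hypothesis (i). Once the system is recast, the remainder is simply verifying that conditions (ii) and (iii) exactly enforce, respectively, admissibility as a probability vector and the positive-growth requirement built into the definition of a Dutch book solution.
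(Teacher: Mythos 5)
Your proposal is correct and follows essentially the same route as the paper's (much terser) proof: both identify the Dutch book conditions together with the normalization $\sum_j f_j=1$ as the linear system whose coefficient matrix is the one inverted in the definition of $\Lambda$, read off $f^D_j=\Lambda_{j,k}$ from its last column, and use condition (iii) for the positive-growth requirement. Your additional checks (non-negativity via (ii), $\sum_j\Lambda_{j,k}=1$ from the $(k,k)$-entry of $M\Lambda=I_k$, and the explicit computation $G(\f^D)=c^{\sum_i p_i}=c$) are details the paper leaves implicit, and your remark that invertibility of the coefficient matrix is effectively a hypothesis built into the statement of $\Lambda$ is accurate.
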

\begin{proof}
We have from Condition $(iii)$
\[
\overline{o}_1(\f^D) = \suml_{j=1}^k o_{1,j} f^D_j = \suml_{j=1}^k o_{1,j}\Lambda_{j,k} > 1;
\]
Moreover from the definition of $\al$ and $\Lambda$ we have $\overline{o_i}(\f^D) = \overline{o}_j(\f^D)$ for all $i\ne j=1\ldots,k.$

\end{proof}
\begin{cor}\label{cor:1}
In the case of a diagonal matrix, i.e., $o_{i,j} = \diag\{o_1,\ldots,o_k\}$, by direct calculation, we obtain 
$
\Lambda_{i,k} = \frac{o_i^{-1}}{\suml_{j=1}^k o_j^{-1}}.$\\
Conditions (i) and (ii) hold true iff $o_i>0$ and condition (iii) holds true iff $\suml_{j=1}^k o_j^{-1} < 1$.
\end{cor}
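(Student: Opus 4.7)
The plan is to exploit the extreme sparsity induced by $o_{ij} = o_i\delta_{ij}$ in order to solve for the last column of $\Lambda$ directly by back-substitution, and then translate each of the three hypotheses of the preceding lemma into the stated conditions on the $o_j$.

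First I would evaluate $\alpha_{i,j} = o_{i,j} - o_{k,j}$ in the diagonal case. For $i \in \{1,\ldots,k-1\}$ only $\alpha_{i,i} = o_i$ and $\alpha_{i,k} = -o_k$ are nonzero; all other entries $\alpha_{i,j}$ vanish. The defining matrix equation of the lemma thus reduces, for its last column, to the sparse linear system
\[
\begin{pmatrix}
o_1 & & & -o_k \\
 & \ddots & & \vdots \\
 & & o_{k-1} & -o_k \\
1 & \cdots & 1 & 1
\end{pmatrix}
\begin{pmatrix} \Lambda_{1,k}\\ \vdots \\ \Lambda_{k-1,k}\\ \Lambda_{k,k}\end{pmatrix}
=
\begin{pmatrix} 0 \\ \vdots \\ 0 \\ 1\end{pmatrix}.
\]
The first $k-1$ rows give $\Lambda_{i,k} = (o_k/o_i)\Lambda_{k,k}$, and substituting into the normalization row $\sum_{j=1}^k \Lambda_{j,k} = 1$ pins down $\Lambda_{k,k} = o_k^{-1}/\sum_{j=1}^k o_j^{-1}$, using the identification $o_k \sum_{j=1}^{k-1} o_j^{-1} + 1 = o_k \sum_{j=1}^{k} o_j^{-1}$. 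Back-substitution then delivers the pleasingly symmetric formula $\Lambda_{i,k} = o_i^{-1}/\sum_{j=1}^k o_j^{-1}$, uniformly in $i$.

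Second, I would verify the three equivalences. Condition (i) collapses, in the diagonal case, to $o_i > 0$ directly since $o_{ji} = 0$ for $j \neq i$. Under (i), all $o_j^{-1}$ are positive, so $\sum_j o_j^{-1} > 0$ and the formula for $\Lambda_{i,k}$ yields a positive number; hence (i) and (ii) jointly are equivalent to $o_i > 0$ for every $i$. For (iii), the diagonal sparsity collapses $\sum_{j=1}^k o_{i,j}\Lambda_{j,k}$ to $o_i \Lambda_{i,k} = 1/\sum_{j=1}^k o_j^{-1}$, independently of $i$, and this exceeds $1$ precisely when $\sum_{j=1}^k o_j^{-1} < 1$.

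The argument is essentially a bookkeeping exercise; there is no substantive obstacle once the sparsity pattern is recognized. The one point that deserves a moment of care is the sign analysis in (ii): strictly speaking the ratio $o_i^{-1}/\sum_j o_j^{-1}$ could be positive if all $o_j$ shared a common negative sign, but condition (i) already excludes this case, so the two conditions really do collapse to the single statement $o_i > 0$.
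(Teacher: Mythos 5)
Your computation is correct and is precisely the ``direct calculation'' the paper invokes without writing out: the paper gives no explicit proof of this corollary, and your back-substitution through the sparse system $o_i\Lambda_{i,k}-o_k\Lambda_{k,k}=0$, $\sum_j\Lambda_{j,k}=1$ is the natural (and evidently intended) route to $\Lambda_{i,k}=o_i^{-1}/\sum_j o_j^{-1}$, with the three equivalences then following exactly as you state. Your closing remark on the sign analysis in (ii) is a legitimate point of care that the paper glosses over, and it is resolved correctly.
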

\begin{cor} For a finite $n$ and assuming
$
\min\{o_{ii}\}_{i=1}^k \gg \max\{o_{ij}\}_{i\ne j} \ge 0,
$
there exists a Dutch book solution $\f^D$.
\end{cor}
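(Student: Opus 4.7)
The plan is to reduce the claim to Corollary~\ref{cor:1} via a perturbation-from-diagonal argument, verifying each of the three hypotheses of the preceding Lemma in turn. First, I would write $O = D + E$ with $D = \diag(o_{11},\ldots,o_{kk})$ and $E$ the off-diagonal part; the hypothesis $\min_i o_{ii}\gg \max_{i\ne j} o_{ij}$ says $\eps := \max_{i\ne j} o_{ij}$ can be made as small as we wish relative to $\min_i o_{ii}$. The $k\times k$ matrix defining $\Lambda$ in the Lemma depends rationally on the entries of $O$, hence the vector $(\Lambda_{1,k},\ldots,\Lambda_{k,k})$ depends continuously on $E$, and as $E\to 0$ converges to the diagonal-case values $\Lambda_{j,k}^{(0)} = o_{jj}^{-1}/\sum_\ell o_{\ell\ell}^{-1}$ produced in Corollary~\ref{cor:1}.

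Conditions (i) and (ii) are then essentially free. For (i), whenever $i\ne j$ one has $o_{ii}\ge \min_\ell o_{\ell\ell}\gg \eps \ge o_{ji}\ge 0$ directly from the hypothesis (the $i=j$ clause being read as vacuous, i.e.\ quantifying over $j\ne i$). For (ii), the diagonal-limit values $\Lambda_{j,k}^{(0)}$ are strictly positive for every $j$, so by continuity $\Lambda_{j,k}>0$ persists throughout a neighbourhood of $D$ in matrix space, and the ``$\gg$'' hypothesis places us inside this neighbourhood.

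The real content sits in condition (iii). In the diagonal limit $\sum_j o_{ij}\Lambda_{j,k}^{(0)} = o_{ii}\Lambda_{i,k}^{(0)} = 1\big/\sum_\ell o_{\ell\ell}^{-1}$, which exceeds $1$ precisely when $\sum_\ell o_{\ell\ell}^{-1}<1$. I intend to read the ``$\gg$'' hypothesis, together with the standing convention $o_{ii}>1$ of the full-fitness-matrix model, as guaranteeing $\min_\ell o_{\ell\ell}>k$ so that $\sum_\ell o_{\ell\ell}^{-1}<1$ strictly; then in the diagonal limit $\sum_j o_{ij}\Lambda_{j,k}^{(0)} > 1$ for every row $i\le k-1$. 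A Neumann-series argument, controlling the first-order correction to $\Lambda$ by $\eps/\min_\ell o_{\ell\ell}$ and exploiting the block structure of the defining matrix (almost-diagonal principal $(k-1)\times(k-1)$ block plus a bordering row and column of ones), propagates the strict inequality from $D$ to $O=D+E$.

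The main obstacle I anticipate is turning the informal ``$\gg$'' into a quantitative threshold — naming an explicit $C=C(k,\{o_{ii}\})$ such that $\eps < C$ secures all three conditions simultaneously. I expect this to be bookkeeping on the Neumann expansion rather than a conceptual difficulty: once the leading-order quantities $\Lambda_{j,k}^{(0)}$ and the row sums $o_{ii}\Lambda_{i,k}^{(0)}$ are bounded away from their respective critical thresholds $0$ and $1$, a first-order perturbation estimate of magnitude $O(\eps/\min_\ell o_{\ell\ell})$ is enough to preserve the sign of each, delivering the existence of $\f^D$ claimed.
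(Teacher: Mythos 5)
Your route is exactly the paper's: the published proof consists of the single sentence that the conclusion ``directly follows'' from Corollary~\ref{cor:1} (the diagonal case), which is precisely the perturbation-from-diagonal reduction you carry out in detail, so your Neumann-series bookkeeping is supplying the argument the authors left implicit rather than diverging from it. You are in fact more careful than the original on one substantive point: as you observe, the hypothesis $\min_i o_{ii}\gg\max_{i\ne j}o_{ij}$ does not by itself secure condition (iii) --- one still needs $\sum_{\ell} o_{\ell\ell}^{-1}<1$ in the diagonal limit (e.g.\ $k=2$, $o_{11}=o_{22}=1.5$ with tiny off-diagonal entries admits no Dutch book with $\overline{o}_1(f)>1$) --- so your added assumption $\min_{\ell} o_{\ell\ell}>k$, or equivalently importing Corollary~\ref{cor:1}'s condition $\sum_{\ell} o_{\ell\ell}^{-1}<1$ as an explicit hypothesis, patches a gap that the paper's one-line proof silently passes over.
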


\begin{proof}
The conclusion directly follows from the above Corollary~\ref{cor:1} (for a diagonal fitness matrix). 
\end{proof}

\section{Finite last intersection} \label{app:D}
In this section, we show that for a given pair of strategies $(f, g)$ with $G(f) > G(g)$, there is a $T(f,g) < \infty$ such that $W_n(f,\x) > W_n(g,\x)$ for all $n \ge T(f,g)$ and for all $\x \in \{0,1\}^{\infty}$. This means that the last intersection between two random trajectories $\{W_n(f,\x)\}_n$ and $\{W_n(g,\x)\}_n$
\[
\tau(\x) := \sup\{n: W_n(f,\x) \le W_n(g,\x)\}
\]
 is bounded above by $T(f,g)$ (a finite number depending only on $f$ and $g$). 
\begin{proof} 
We first define the excess growth rate 
\[
E_n(\x) := \frac{1}{n} \log W_n(f,\x) - \frac{1}{n} \log W_n(g,\x).
\]
We note that for all $\x$
\bel{eq:cond}
\lim_{n\to \infty} E_n(\x) = \log G(f) - \log G(g) >0.
\qe 
To this end we need to prove that there is a $T(f,g) < \infty$ such that 
\[
\inf_{\x} E_n(\x) > 0 \quad \forall n\ge T(f,g). 
\]
Otherwise, for each $k$ there exist $n_k \ge k$ and $\x_k \in \{0,1\}^{\infty}$ such that $E_{n_{k}} (\x_k) \le 0$. Now, there exists a subsequence of $\{\x_k\}$ which is convergent to some $\x\in \{0,1\}^{\infty}$. Therefore as $k\to \infty$ we have $n_k\to \infty$ and $\lim_{n_k\to \infty}E_{n_{k}} (\x) \le 0$, in contradiction to \eqref{eq:cond}.
\end{proof}

\section{Asymptotic log-normality of the growth rate} \label{app:E}
In this section, we show that in our discrete model, the growth rate approaches log-normality with zero variance. 
\begin{proof}
We rewrite 
$	\frac{1}{n} \log W_n(f) = \frac{1}{n} \sum_{i=1}^n y_i,$ where $y_i = x_i \log  \overline{o}_1(f)  + (1-x_i) \log  \overline{o}_2(f) $ are independent discrete random variables with values: $\log  \overline{o}_1(f) , \log  \overline{o}_2(f) $ and probabilities: $p, 1-p$ correspondingly. Thus we have a sequence of i.i.d. random variables $\{y_i\}_i$ with expectation $\mu = E(y_i) = G(f)$ and variance $\sigma^2 = var(y_i) = p(1-p) (\log \overline{o}_1(f) - \log \overline{o}_2(f))^2$. By using the CLT, we have for a large $n$:
$
1/ \sqrt{n} \sum_{i=1}^n ( y_i - \mu) \sim N(0,\sigma^2)
$
which is equivalent to 
\[
\frac{1}{n} \sum_{i=1}^n y_i   \sim  N(\mu, \frac{\sigma^2}{n}).
\]  

\end{proof}

\section{Fully correlated log growth rates for the case k=2:} \label{app:F}
In this section we show that for all $f, g\ne f^D$
\[
\corr[\log W_n(f),\log  W_n(g)] = \pm 1.
\]

\begin{proof}
Denote by
\[
W_n(f, \x) = \overline{o}_1(f)^{|\x|}   \overline{o}_2(f)^{n-|\x|},
\]
where $\x = (x_1,\ldots,x_n)$ is a realization and $|x| = x_1+\cdots+x_n$. Because $f,g \ne f^D$ we have $\overline{o}_1(f)\ne \overline{o}_2(f)$ and $\overline{o}_1(g)\ne \overline{o}_2(g)$, therefore we can define
\[
\ld = \frac{\log \frac{\overline{o}_1(f)}{\overline{o}_2(f)}}{\log \frac{\overline{o}_1(g)}{\overline{o}_2(g)}} \in \R\setminus\{0\}.
\]
We first prove that for any given $m$ realizations $\x^{(1)}, \ldots, \x^{(m)}$, we have
\bel{eq:ld}
\log W_n(f,\x^{(i)}) - \frac{1}{m} \sum_{k=1}^m \log W_n(f,\x^{(k)}) = \ld \Bigg(\log W_n(g,\x^{(i)}) - \frac{1}{m} \sum_{k=1}^m \log W_n(g,\x^{(k)})\Bigg).
\qe
Indeed, we note that
\[
\begin{split}
\log W_n(f,\x^{(i)}) - \log W_n(f,\x^{(j)})  = \log \frac{\overline{o}_1(f)^{|\x^{(i)}|}   \overline{o}_2(f)^{n-|\x^{(i)}|}}{ \overline{o}_1(f)^{|\x^{(j)}|}   \overline{o}_2(f)^{n-|\x^{(j)}|}}
= (|\x^{(i)}| - |\x^{(j)}|)\log \frac{\overline{o}_1(f)}{\overline{o}_2(f)},
\end{split}
\]
and similarly for $g$.
This implies \eqref{eq:ld}. Therefore
\[
\begin{split}
\corr&[\log W_n(f),\log  W_n(g)] = \frac{\cov[\log W_n(f),\log  W_n(g)]}{\sqrt{\var[\log W_n(f)]}\sqrt{\var[\log W_n(g)]}}\\
&= \frac{\frac{1}{m} \sum\limits_{i=1}^m (\log W_n(f,\x^{(i)}) - E[\log W_n(f)])(\log W_n(g,\x^{(i)}) - E[\log W_n(g)])}{\sqrt{\frac{1}{m} \sum\limits_{i=1}^m \Bigg(\log W_n(f,\x^{(i)}) - E[\log W_n(f)]\Bigg)^2}\sqrt{\frac{1}{m} \sum\limits_{i=1}^m \Bigg(\log W_n(g,\x^{(i)}) -E[\log W_n(g)])\Bigg)^2}}=\frac{\ld}{|\ld|} = \pm 1.
\end{split}
\]
\noindent
\end{proof}

\begin{rem}
Whether the correlation is $\pm1$ depends on $\lambda > 0$ or $\lambda < 0$. For $f=f^{Kelly}$ the growth factor with  environment ``1'' $>$ the growth factor with environment ``0'' implying $\log \frac{ \overline{o}_1(f) } { \overline{o}_2(f)} > 0$. Similarly for $g$ it implies $\log \frac{ \overline{o}_1(g) } { \overline{o}_2(g)} > 0$, therefore $\lambda > 0$. At $f^D$, $\log \frac{ \overline{o}_1(f^D) } { \overline{o}_2(f^D)} = 0$, therefore it acts as a threshold. In most cases the correlation will be $+1$ since both $f$ and $g$ induce a positive growth rate. 
\end{rem}

\section{Kelly is the maximal element in the fitness payoff relation} \label{app:H}
Here we assume lineage size initial randomization, i.e., $W_n(f) \gg W_n(g)$ iff 
\[
M_n(f,g):= \PP(W_0 W_n(f) > V_0 W_n(g)) \ge \PP(V_0 W_n(g)>W_0 W_n(f))
\]
where $W_0$ and $V_0$ are random, and show that the Kelly strategy is the maximal element in this relation. 
\label{prop:sumoneinv}
\begin{proof}
As a direct consequence of Proposition~\ref{prop:sumone} and Eq.~\eqref{eq:max} we have
\[
M_n(f^{Kelly},f) \ge  1/2 \ge M_n(f,f^{Kelly}) \quad \forall f \in [0,1],
\]
and equality if and only if $f=f^{Kelly}$. 
\end{proof}

\section{Non-constant-sum game, but conceptually zero-sum} \label{app:J}

In this section we show that 
\begin{prop}\label{prop:sumone}
\begin{enumerate}
\item[(i)] For $d=0$, $M_n(f,g)+M_n(g,f) = 1$ for all $f,g$.
\item[(ii)] For $d> 0$, $M_n(f,g)+M_n(g,f) < 1$ for all $f,g$.
\end{enumerate}
\noindent
Moreover, the game is conceptually zero-sum, but not formally. 
\end{prop}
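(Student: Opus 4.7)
The plan is to rewrite $M_n(f,g) + M_n(g,f)$ in terms of survival events and extract a joint-extinction correction term. Introduce the survival events $A_f := \{u_0 W_i(f) > d \text{ for all } i = 1, \ldots, n\}$ and $A_g := \{v_0 W_i(g) > d \text{ for all } i = 1, \ldots, n\}$. The payoff from Eq.~\eqref{eq:5} may then be written compactly as $M_n(f,g) = \PP(A_f \cap A_g \cap \{u_0 W_n(f) > v_0 W_n(g)\}) + \PP(A_f \cap A_g^c)$, capturing the two mutually exclusive ways in which the primary lineage ``wins'': either both survive and it ends up larger, or the opponent goes extinct while it survives.

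For part (i), setting $d = 0$ makes $A_f$ and $A_g$ almost-sure events, since the trajectories are strictly positive products of the $\overline{o}_i > 0$ and $u_0, v_0$ are positive a.s. The second summand in $M_n$ therefore vanishes, leaving $M_n(f,g) = \PP(u_0 W_n(f) > v_0 W_n(g))$. Because $u_0$ and $v_0$ are independent continuous randomizations, the tie event $\{u_0 W_n(f) = v_0 W_n(g)\}$ has probability zero, so the two strict inequalities partition the sample space up to a null set. Summing the symmetric expressions immediately yields $M_n(f,g) + M_n(g,f) = 1$.

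For part (ii), summing the two symmetric decompositions gives
\[
M_n(f,g) + M_n(g,f) = \PP(A_f \cap A_g) + \PP(A_f \cap A_g^c) + \PP(A_g \cap A_f^c) = 1 - \PP(A_f^c \cap A_g^c),
\]
where the first equality uses again that on $A_f \cap A_g$ the complementary strict inequalities partition the event up to a null tie set. The conclusion $M_n(f,g) + M_n(g,f) < 1$ thus reduces to showing the joint-extinction probability is strictly positive. This follows by noting that $u_0, v_0$ place positive mass on values near or below $d$, so $\PP(\{u_0 \le d\} \cap \{v_0 \le d\}) > 0$ by independence and both lineages are immediately extinct on this event; alternatively, for initial randomizations supported well above $d$, one can exhibit a specific environmental history (of positive probability under the multinomial model) that drives both trajectories below $d$ within the horizon $n$.

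The ``conceptually zero-sum'' remark is then almost immediate: the missing mass $1 - (M_n(f,g) + M_n(g,f)) = \PP(A_f^c \cap A_g^c)$ is exactly the joint-extinction tie, in which neither player wins. Assigning this symmetric event the value zero for both sides embeds the game into a $\{+1, 0, -1\}$ payoff scheme that does sum to zero, so the formal non-constant-sum structure is an artifact of collapsing the tie into the base outcomes. The main obstacle I anticipate is ensuring $\PP(A_f^c \cap A_g^c) > 0$ uniformly in $(f,g)$; this requires only a mild non-degeneracy assumption on the support of the initial randomization or on the fitness matrix relative to $d$, rather than a substantive new argument.
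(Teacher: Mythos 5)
Your proof is correct and follows essentially the same route as the paper's: the identical decomposition of $M_n(f,g)$ into the survive-and-win and opponent-extinct terms, cancellation of the strict-inequality complements on the joint-survival event (using the continuity of the initial randomization to kill ties), and the identity $M_n(f,g)+M_n(g,f)=\PP(A_f\cup A_g)=1-\PP(A_f^c\cap A_g^c)$. The one place you go beyond the paper is in flagging and sketching why $\PP(A_f^c\cap A_g^c)>0$ --- the paper simply asserts $\PP(A\cup B)<1$ --- and you are right that this strictness requires a non-degeneracy condition on the initial randomization or the fitness matrix relative to $d$ (e.g.\ it fails at the Dutch-book strategy with initial mass bounded above $d$), so your explicit caveat is a genuine, if small, improvement.
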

\begin{proof}
(i) We have from Eq.~\eqref{eq:Mn}
\[
\begin{split}
M_n(f,g)+M_n(g,f) &=  \sum_{s=0}^n  \Big(\PP(W_n(f,s) W_0 > W_n(g,s) V_0)+ \PP(W_n(f,s) W_0 < W_n(g,s) V_0) \Big)P(s) =\sum_{s=0}^n P(s)  = 1.
\end{split}
\]
(ii) On the other hand, we have from Eq.~\eqref{eq:5} for all $f\ne g$
\[
\begin{split}
M_n(f,g)+M_n(g,f) &= \PP(CAB) + \PP(AB^c) + \PP(C^c AB) + \PP(BA^c)= \PP(AB) + \PP(AB^c) + \PP(BA^c) = \PP(A\cup B) < 1.
\end{split}
\]
where $C = \{W_0 W_n(f) > V_0 W_n(g)\}$, $A= \{W_0 W_i(f) > d \quad \forall i=1,\ldots,n\}$, $B= \{V_0 W_i(g) > d \quad \forall i=1,\ldots,n\}$.\\
For $f=g$ we also have
\[
\begin{split}
M_n(f,f)&= \PP(W_0 > V_0, A_1, A_2)< \PP(W_0 > V_0) =\frac{1}{2}.
\end{split}
\]
where $A_1= \{W_0 W_i(f) > d \quad \forall i=1,\ldots,n\}$, $A_2= \{V_0 W_i(f) > d \quad \forall i=1,\ldots,n\}$.\\
Finally, numeric simulations demonstrate that if $M(W,V)>M(U,V)$ then $M(V,W)<M(V,U)$ for all $W,V,U$, i.e. changing to a strategy with a gain for one player always incurs a loss for the other player.
\end{proof}

\section{The symmetric Nash equilibrium solution to payoff $M_n(f,g)$} \label{app:K}
\begin{prop}\label{prop:cond}
We always have
\[
\E\Bigg(\frac{W_n(f)}{W_n(f^{Kelly})}\Bigg) \le 1
\]
and the equality happens if and only if $p_{-}<p<p_{+}$.
\end{prop}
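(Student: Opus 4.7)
The overall plan is to exploit the simple product form of $W_n(f)$ available in the $k=2$ case to reduce the expectation to a closed-form $n$th power, then invoke the Kuhn-Tucker conditions for log-optimality to bound the base by $1$, and finally characterize equality via interior-vs-boundary status of $f^{Kelly}$.

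First I would write $W_n(f) = \overline{o}_1(f)^H\,\overline{o}_2(f)^{n-H}$ with $H\sim\text{Binomial}(n,p)$, emphasizing that the \emph{same} $H$ drives both $W_n(f)$ and $W_n(f^{Kelly})$ since they share the environmental realization. Setting $r_i := \overline{o}_i(f)/\overline{o}_i(f^{Kelly})$, the ratio becomes $r_1^H\,r_2^{n-H}$, and taking expectation via the binomial moment generating function yields
\[
\E\Bigl[\tfrac{W_n(f)}{W_n(f^{Kelly})}\Bigr] \;=\; \bigl(p\,r_1 + (1-p)\,r_2\bigr)^n.
\]
So the claim reduces to showing $p\,r_1 + (1-p)\,r_2 \le 1$, with equality characterizing the stated equality case.

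Next, I would expand $\overline{o}_i(f) = f_1 o_{i1} + f_2 o_{i2}$ (with $f_1=f$, $f_2=1-f$) and rewrite the base as $f_1 A_1 + f_2 A_2$, where $A_j := \frac{p\,o_{1j}}{\overline{o}_1(f^{Kelly})} + \frac{(1-p)\,o_{2j}}{\overline{o}_2(f^{Kelly})}$. Because $f^{Kelly}$ maximizes the concave function $p\log \overline{o}_1(f) + (1-p)\log \overline{o}_2(f)$ over the unit simplex (cf.\ Appendix~\ref{app:A}), its Kuhn-Tucker conditions give $A_j \le 1$ for every $j$, with equality whenever $f^{Kelly}_j > 0$. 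Combined with $f_1+f_2=1$ this produces $f_1 A_1 + f_2 A_2 \le 1$, establishing the inequality.

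For the equality characterization, I would argue that when $p \in (p_-, p_+)$ both $f^{Kelly}_1, f^{Kelly}_2 > 0$, so $A_1 = A_2 = 1$ and the base equals $f_1+f_2=1$ for every $f$; conversely, at a boundary regime such as $p\le p_-$ one has $f^{Kelly} = (0,1)$, only $A_2=1$ holds tightly, while $A_1 < 1$ strictly, so any $f$ with $f_1 > 0$ gives strict inequality. The one mildly technical point I expect to be the main obstacle is certifying the strictness $A_1 < 1$ in the boundary case: I would substitute the closed-form $f^{Kelly}$ from Appendix~\ref{app:A} into $A_1$ and exhibit its sign, using that $A_1 = 1$ precisely at the transition $p = p_-$ so that $p < p_-$ forces strict inequality (an analogous argument handles $p \ge p_+$). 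The remaining steps are essentially the binomial MGF identity and a relabeling of the Kelly optimality conditions, both routine once set up.
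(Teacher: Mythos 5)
Your proof is correct, and its first half coincides exactly with the paper's: both reduce $\E[W_n(f)/W_n(f^{Kelly})]$ via the binomial sum to $\bigl(p\,\al_1+(1-p)\,\al_2\bigr)^n$ with $\al_i=\overline{o}_i(f)/\overline{o}_i(f^{Kelly})$, so everything hinges on showing the base is at most $1$. Where you diverge is in how that last inequality is certified. The paper substitutes the closed-form expression for $f^{Kelly}$ from Appendix~\ref{app:A} and checks directly (and rather tersely) that $p\,\al_1+(1-p)\,\al_2=1$ when $p\in[p_-,p_+]$ and is $<1$ otherwise. You instead rewrite the base as $f_1A_1+f_2A_2$ with $A_j=\frac{p\,o_{1j}}{\overline{o}_1(f^{Kelly})}+\frac{(1-p)\,o_{2j}}{\overline{o}_2(f^{Kelly})}$ and invoke the Kuhn--Tucker conditions for log-optimality ($A_j\le 1$, with equality where $f^{Kelly}_j>0$) --- i.e., the Cover--Thomas Theorem 16.2.2 route that the paper itself cites only in its Discussion. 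Your approach buys two things: it generalizes verbatim to $k>2$ environments (the paper's verification is tied to the explicit $k=2$ formula), and it makes the equality dichotomy structurally transparent as interior-versus-vertex position of $f^{Kelly}$ on the simplex. It is also slightly more careful than the paper on the ``only if'' direction: as you note, when $p\le p_-$ the strict inequality requires $f_1>0$, since $f=f^{Kelly}$ itself always gives equality --- a qualification the paper's statement glosses over. The one step you flagged as a potential obstacle (strictness of $A_1<1$ for $p<p_-$) does go through by the substitution you describe, since $A_1$ is affine and increasing in $p$ with $A_1=1$ exactly at $p=p_-$.
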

\begin{proof}
For given $f,g$, we denote by $\al_1 = \frac{\overline{o}_1(f)}{\overline{o}_1(g)}$, $\al_2 = \frac{\overline{o}_2(f)}{\overline{o}_2(g)}$. We have 
\[
\begin{split}
\E\Bigg(\frac{W_n(f)}{W_n(g)}\Bigg) &= \sum_{\x} \frac{W_n(f,\x)}{W_n(g,\x)} P(\x)= \sum_{s=0}^n \frac{W_n(f,s)}{W_n(g,s)} P(s)= \sum_{s=0}^n \al_1^s \al_2^{n-s} \binom{n}{s} p^s (1-p)^{n-s}= (p \al_1 +(1-p)\al_2)^n.
\end{split}
\]
On the other hand, from the formula 
$
f^{Kelly} = \begin{cases}
0,& \text{ if } p\in [0,p_{-}]\\ 
\frac{(1-p) o_{12}}{o_{12}-o_{11}} + \frac{p o_{22}}{o_{22}-o_{21}},& \text{ if } p\in [p_{-}, p_{+}]\\
1,& \text{ if } p\in [p_{+},1],
\end{cases}
$ we have for any pair $(f,f^{Kelly})$, $p \al_1 +(1-p)\al_2 = 1$ if $p\in [p_{-}, p_{+}]$ and $p \al_1 +(1-p)\al_2 < 1$ if $p\notin [p_{-}, p_{+}]$. 
\end{proof}

\begin{prop}\label{prop:3}
We consider a game with payoff without extinction
\[
M_n(f,g) := \PP(W_n(f) W_0 > W_n(g) V_0), 
\]
where $W_0, V_0$ have the same distribution.
Then, in this game, $(f^{Kelly},f^{Kelly})$ is a strict Nash equilibrium.
\end{prop}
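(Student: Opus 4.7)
First I would reduce the problem to a single inequality. At the diagonal the common factor $W_n(f^{Kelly})$ cancels, so
\[
M_n(f^{Kelly},f^{Kelly}) \;=\; \PP(W_0 > V_0) \;=\; \tfrac{1}{2},
\]
the last equality because $W_0$ and $V_0$ are iid. By the symmetry of the payoff structure, the strict Nash condition for both players then reduces to showing
\[
M_n(f,f^{Kelly}) \;<\; \tfrac{1}{2} \qquad \text{for every } f \ne f^{Kelly},
\]
together with the non-strict version for all $f$.

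The main step is a Bell--Cover type argument. Let $Z := W_n(f)/W_n(f^{Kelly})$, which is independent of $(W_0,V_0)$. Proposition~\ref{prop:cond} provides the log-optimality bound $\E[Z] \le 1$. Define
\[
q(z) \;:=\; \PP(W_0 z > V_0),
\]
so that conditioning on $Z$ gives $M_n(f,f^{Kelly}) = \E[q(Z)]$. Two structural properties of $q$ follow immediately from $W_0 \stackrel{d}{=} V_0$: $q$ is nondecreasing on $(0,\infty)$, and
\[
q(z) + q(1/z) \;=\; 1,
\]
whence $q(1) = \tfrac{1}{2}$ and $q - \tfrac{1}{2}$ is odd under the involution $z \mapsto 1/z$. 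Assuming $q$ is concave --- which holds when the density of $W_0/V_0$ is nonincreasing on $(0,\infty)$, as for standard fair randomizations such as uniform on a positive interval or exponential --- Jensen's inequality together with monotonicity yields
\[
\E[q(Z)] \;\le\; q(\E[Z]) \;\le\; q(1) \;=\; \tfrac{1}{2},
\]
the desired discrete-time analogue of the Bell--Cover competitive-optimality bound referenced in the Discussion.

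Strictness splits into two branches. If $p \notin [p_-,p_+]$, or more generally if the inequality in Proposition~\ref{prop:cond} is strict, then $\E[Z] < 1$ and strict monotonicity of $q$ already forces $\E[q(Z)] < \tfrac{1}{2}$. Otherwise $\E[Z] = 1$ but $f \ne f^{Kelly}$ still gives $\E[\log Z] < 0$ by the strict log-optimality of Kelly, so $Z$ is nondegenerate; strict concavity of $q$ at $z = 1$ (positivity of the density of $W_0/V_0$ near $1$) then yields $\E[q(Z)] < q(1) = \tfrac{1}{2}$ by the strict form of Jensen.

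\textbf{Main obstacle.} The one delicate step is justifying the concavity of $q$. For an arbitrary iid pair $(W_0,V_0)$ the density of the ratio $W_0/V_0$ need not be monotone, so Jensen does not apply off the shelf. I expect the cleanest fix is a symmetrization argument in place of Jensen: use the antisymmetry $q(z) - \tfrac{1}{2} = -(q(1/z) - \tfrac{1}{2})$ to couple the contribution of $Z$ on $\{Z > 1\}$ against that on $\{Z < 1\}$, and then invoke $\E[Z] \le 1$ to show that the mass that $Z$ can place above $1$ is dominated (after the $z \mapsto 1/z$ reflection) by its mass below $1$, weighted by the monotone function $q - \tfrac{1}{2}$. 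This is the step where the ``same distribution'' hypothesis on $(W_0,V_0)$ is genuinely used, and it is also where the Bell--Cover choice of randomization enters; everything else is a direct translation of Proposition~\ref{prop:cond} plus the symmetry of the game.
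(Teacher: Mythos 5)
Your proof is essentially the paper's own argument in different clothing, and it is correct for the randomizations the paper actually has in mind. The paper likewise conditions on the environment count $s$, writes $M_n(f,f^{Kelly})=\sum_s \PP(\al_1^s\al_2^{n-s}W_0>V_0)P(s)=\E[q(Z)]$ with $Z=W_n(f)/W_n(f^{Kelly})$, and closes with Proposition~\ref{prop:cond}; its ``Cauchy inequality'' $1-\tfrac{1}{2z}\le \tfrac{z}{2}$ is exactly the tangent line to your concave $q$ at $z=1$, so your Jensen-plus-monotonicity step is the same estimate written more abstractly. Your ``main obstacle'' does not arise in the paper only because the paper silently fixes the randomization: the identity $\PP(zW_0>V_0)=\tfrac{z}{2}$ for $z<1$ used in Eq.~\eqref{eq:Mn} is the i.i.d.\ uniform-on-$[0,b]$ case, for which $q(z)=\min\{\tfrac z2,\,1-\tfrac1{2z}\}$ is indeed concave. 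You are right that this is where the hypothesis on $(W_0,V_0)$ is genuinely used, and right that ``same distribution'' alone does not deliver it --- but be warned that your proposed symmetrization rescue cannot work in full generality. Exchangeability gives only $q(z)+q(1/z)=1$; if the common law is, say, a mixture of two narrow bumps near $1$ and near $10$, then $q$ climbs to $\tfrac14$ just above $z=0.1$ and to $\tfrac34$ just above $z=1$, and the two-point variable $Z$ taking values $0.15$ and $1.6$ with probabilities $0.45$ and $0.55$ satisfies $\E[Z]<1$ yet $\E[q(Z)]=0.525>\tfrac12$. So any proof using only $\E[Z]\le1$ must restrict the randomization (uniform, or more generally any law making the density of $V_0/W_0$ nonincreasing so that $q$ is concave); once that is made explicit, your argument, including the two-branch strictness discussion, is complete and matches the paper's.
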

\begin{proof}
First, we note that
\bel{eq:Mn}
\begin{split}
M_n(f,g) &=\sum_{s=0}^n  \mathbf \PP(W_n(f,s) W_0 > W_n(g,s) V_0) P(s)=   \sum_{s=0}^n  \mathbf \PP(\al_1^s \al_2^{n-s} W_0 > V_0) P(s)\\
&= \sum_{s \in A_1} P(s) \frac{1}{2} \al_1^s \al_2^{n-s} + \sum_{s \in A_2} P(s) \Bigg(1- \frac{1}{2} \al_1^{-s} \al_2^{-n+s}\Bigg),
\end{split}
\qe
where $A_1 = \{s\in \{0,\ldots,n\}: \al_1^s \al_2^{n-s}<1\}$ and $A_2 = \{0,\ldots,n\} - A_1$.
Therefore, for $f=g$ we have $\al_1=\al_2=1$, which implies $A_1=\emptyset$, $A_2=\{0,\ldots,n\}$ and
\bel{eq:equalhalf}
\begin{split}
M_n(f,f) =\sum_{s=0}^n P(s) \Bigg(1- \frac{1}{2}\Bigg)=1/2.
\end{split}
\qe
For any $f\ne f^{Kelly}$, by using the Cauchy inequality for the second term, we have

\[
\begin{split}
M_n(f,f^{Kelly}) &<  \sum_{s \in A_1} P(s) \frac{1}{2} \al_1^s \al_2^{n-s} + \sum_{s \in A_2} P(s) \frac{1}{2} \al_1^s \al_2^{n-s}= \frac{1}{2} (p\al_1+ (1-p)\al_2)^{n}.
\end{split}
\]
From Proposition~\ref{prop:cond} we have 
\bel{eq:max}
M_n(f,f^{Kelly}) < 1/2 = M_n(f^{Kelly},f^{Kelly}), \quad \forall f\ne f^{Kelly}.
\qe
Therefore $(f^{Kelly},f^{Kelly})$ is a strict Nash equilibrium.

\end{proof}

\begin{prop}\label{prop:4}
The above Nash equilibrium is the unique one in the game.
\end{prop}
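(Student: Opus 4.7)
The plan is to combine two facts already established in the excerpt: the constant-sum identity $M_n(f,g) + M_n(g,f) = 1$ for $d=0$ from Proposition~\ref{prop:sumone}(i), which is the relevant regime for Proposition~\ref{prop:3}; and the strict inequality $M_n(f,f^{Kelly}) < 1/2$ from \eqref{eq:max}, valid for every $f \ne f^{Kelly}$. Taking the constant-sum identity and substituting the strict inequality yields its mirror $M_n(f^{Kelly},f) > 1/2$ for every $f \ne f^{Kelly}$. Informally, these say that $f^{Kelly}$ strictly dominates every other strategy in head-to-head play, which is the only structural input we need.

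Next, I would suppose $(f^*, g^*)$ is any Nash equilibrium and argue by elimination. The Nash best-response conditions, applied with the deviations $f = f^{Kelly}$ and $g = f^{Kelly}$, give
\[
M_n(f^*, g^*) \;\ge\; M_n(f^{Kelly}, g^*) \quad \text{and} \quad M_n(g^*, f^*) \;\ge\; M_n(f^{Kelly}, f^*).
\]
Adding these and using the constant-sum identity on the left side produces $1 \ge M_n(f^{Kelly}, g^*) + M_n(f^{Kelly}, f^*)$. If both $f^* \ne f^{Kelly}$ and $g^* \ne f^{Kelly}$, then each term on the right is strictly greater than $1/2$ by the mirror inequality, contradicting the bound. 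Hence at least one coordinate must equal $f^{Kelly}$.

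Finally, I would upgrade this to the full conclusion. Assuming without loss of generality that $g^* = f^{Kelly}$, the Nash condition on the first coordinate reads $M_n(f^*, f^{Kelly}) \ge M_n(f, f^{Kelly})$ for all $f$; specializing to $f = f^{Kelly}$ gives $M_n(f^*, f^{Kelly}) \ge 1/2$. Combined with \eqref{eq:max}, which forces $M_n(f^*, f^{Kelly}) < 1/2$ whenever $f^* \ne f^{Kelly}$, this yields $f^* = f^{Kelly}$. The symmetric argument handles the case $f^* = f^{Kelly}$, and together these rule out any Nash equilibrium other than $(f^{Kelly}, f^{Kelly})$.

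I do not anticipate a serious obstacle; the proof is essentially a two-line consequence of Propositions~\ref{prop:sumone} and \ref{prop:3}. The only point requiring minor care is ensuring the mirror inequality $M_n(f^{Kelly}, f) > 1/2$ is truly strict for $f \ne f^{Kelly}$, but this is immediate from \eqref{eq:max} via the constant-sum identity, so no new estimate or continuity/compactness argument is needed.
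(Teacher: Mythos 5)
Your argument is correct and follows essentially the same route as the paper's proof: both rest on applying the Nash best-response conditions with a deviation to $f^{Kelly}$, the strict dominance $M_n(f^{Kelly},f)>\tfrac12$ for $f\ne f^{Kelly}$ (the mirror of \eqref{eq:max} via the sum identity), and the contradiction with Proposition~\ref{prop:sumone}. The paper merely compresses your two-stage elimination into a single step by taking $g_0\ne f^{Kelly}$ without loss of generality and deriving $M_n(f_0,g_0)+M_n(g_0,f_0)>1$ directly.
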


\begin{proof}
Assume that $(f_0,g_0) \ne (f^{Kelly}, f^{Kelly})$ is another Nash equilibrium. Without loss of generality we assume that $g_0 \ne f^{Kelly}$. By definition of a Nash equilibrium, we have $M_n(f_0,g_0) \ge M_n(f,g_0)$ for all $f$ and $M_n(g_0,f_0) \ge M_n(g,f_0)$ for all $g$. By choosing $f=g=f^{Kelly}$ and using Propposition~\ref{prop:sumoneinv} we have
$M_n(f_0,g_0) \ge M_n(f^{Kelly},g_0) > \frac{1}{2}$ and $M_n(g_0,f_0) \ge M_n(f^{Kelly},f_0) \ge \frac{1}{2}$. This implies that $M_n(f_0,g_0) + M_n(g_0,f_0) > 1$ which is a contradiction to Proposition~\ref{prop:sumone}. Therefore $(f^{Kelly}, f^{Kelly})$  is the unique Nash equilibrium (see Fig.~\ref{fig:AK} where the equilibrium lies at the saddle-point of the payoff landscape.)
\end{proof}
\noindent

\begin{figure}[h]
\centering\includegraphics[width=6cm]{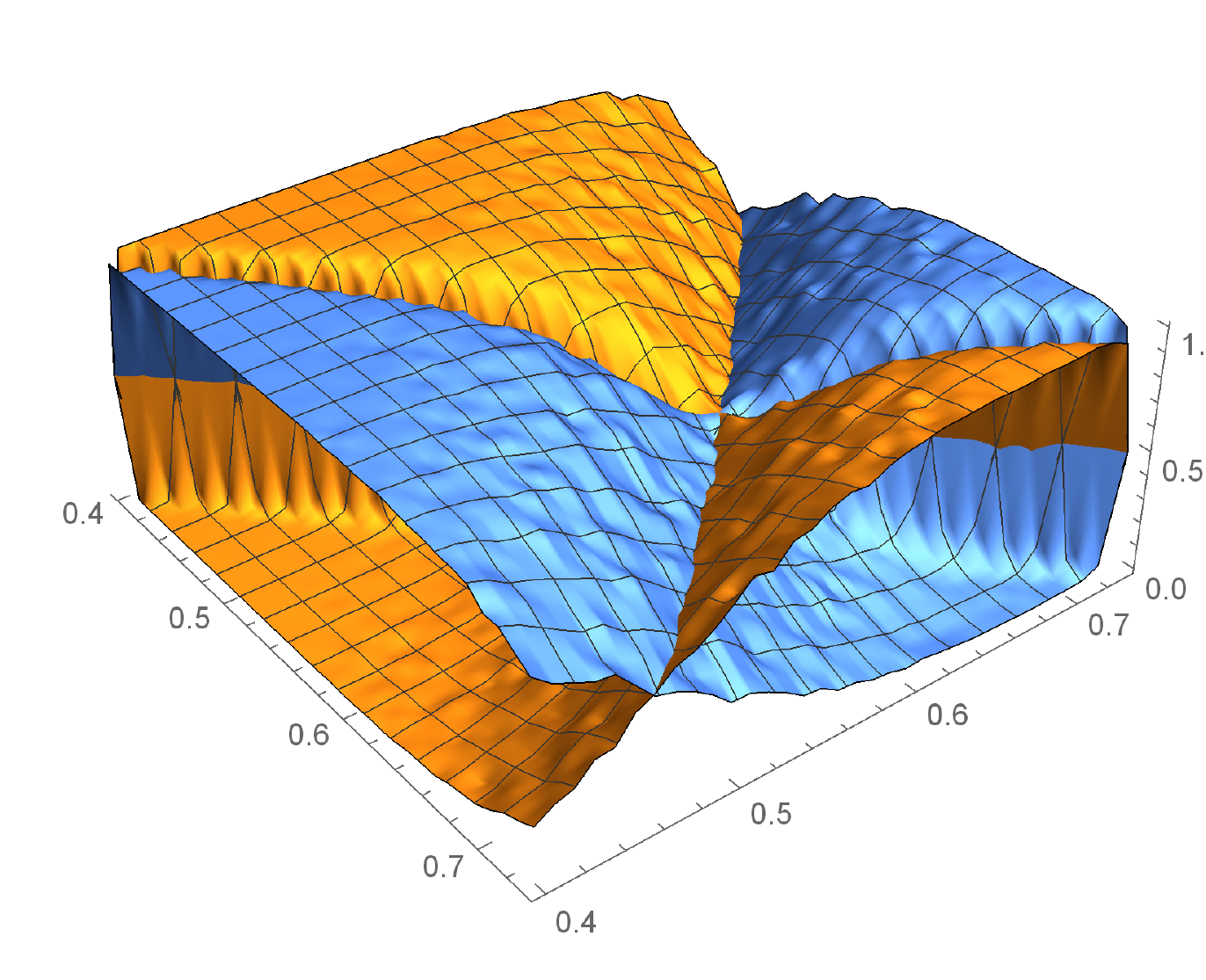}
\caption{{\small A 3D graphical representation of a probability payoff matrix (Eq.\eqref{eq:5})) with primary (blue) and opponent (orange) lineage
payoffs as intersecting saddle-point surfaces, highlighting the equilibrium locus.
}}
\label{fig:AK}
\end{figure}

\section{The symmetric Nash equilibrium solution to payoff $M_c (f,g)$} \label{app:L}
\begin{prop}
We consider a game with payoff defined as \eqref{eq:6} without extinction
\[
M_c(f,g) := \PP(T(f,c) < T(g,c)) + \PP(T(f,c) = T(g,c), W_0W_{T(f,c)}(f)> V_0W_{T(g,c)}(g)). 
\]
Then, in this game, $(f^{Kelly},f^{Kelly})$ is a strict Nash equilibrium.
\end{prop}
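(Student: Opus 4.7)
The plan is to parallel the arguments of Propositions~\ref{prop:3}--\ref{prop:4} in Appendix~\ref{app:K}, with the deterministic horizon $n$ replaced by the random stopping time $T(f,c)$. First I would establish the diagonal: when $f=g$, both trajectories coincide pathwise, so $T(f,c)=T(g,c)$ and $W_{T(f,c)}(f)=W_{T(g,c)}(g)$ almost surely, and the payoff collapses to $\PP(W_0>V_0)=1/2$. In particular $M_c(f^{Kelly},f^{Kelly})=1/2$, so strict Nash at $(f^{Kelly},f^{Kelly})$ reduces to showing $M_c(f,f^{Kelly})<1/2$ for every $f\ne f^{Kelly}$.

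Second, I would use the representation $M_c(f,g)=\sum_{n=0}^{\infty}\PP(W_0 W_n(f)>V_0 W_n(g),\,T(f,c)=n)$ announced in the main text. Conditioning on the environmental sample path $\x$, both $T(f,c)(\x)$ and the wealth values $W_n(f,\x),\,W_n(f^{Kelly},\x)$ become deterministic, so the remaining randomness inside each summand is carried entirely by $W_0,V_0$. Writing $\al(\x):=W_{T(f,c)(\x)}(f,\x)/W_{T(f,c)(\x)}(f^{Kelly},\x)$ and applying the Cauchy-type bound $\PP(aW_0>V_0)\le a/2$ that drives Proposition~\ref{prop:3} (strict whenever $a\ne 1$) gives
\[
M_c(f,f^{Kelly})\;\le\;\tfrac{1}{2}\,\E\!\left[\frac{W_{T(f,c)}(f)}{W_{T(f,c)}(f^{Kelly})}\right].
\]

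Third, I would bound the right-hand expectation by $1$ via an optional-stopping argument. By Proposition~\ref{prop:cond} the one-step conditional expectation of $W_n(f)/W_n(f^{Kelly})$ equals the current value times $p\al_1+(1-p)\al_2\le 1$, so the ratio is a nonnegative super-martingale starting at $1$. Optional stopping at $T(f,c)\wedge N$ followed by a Fatou limit --- legitimate because $T(f,c)<\infty$ almost surely for any strategy with positive asymptotic growth rate, so in particular for $f^{Kelly}$ --- yields the bound $\E[W_{T(f,c)}(f)/W_{T(f,c)}(f^{Kelly})]\le 1$. The strict Cauchy step contributes strict inequality as soon as $\PP(\al(\x)\ne 1)>0$, which holds for every $f\ne f^{Kelly}$ (otherwise $W_n(f)/W_n(f^{Kelly})\equiv 1$, forcing $f=f^{Kelly}$). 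Uniqueness of the symmetric equilibrium is then obtained verbatim from the argument of Proposition~\ref{prop:4}, invoking the non-constant-sum analog $M_c(f,g)+M_c(g,f)\le 1$ of Proposition~\ref{prop:sumone} to exclude any alternative equilibrium pair.

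The main obstacle I anticipate is the optional-stopping step: one must control the tails of $T(f,c)$ uniformly in $f$ (including strategies with slow or non-positive asymptotic growth, for which $T(f,c)$ may fail to be almost surely finite), handle the paths with $T(f,c)=\infty$ that Eq.~\eqref{eq:6} already isolates by explicit conditioning, and verify that the truncation-and-Fatou step does not dilute the strict-inequality conclusion on the event $\{\al(\x)\ne 1\}$. A secondary issue is checking that the Cauchy bound used in Proposition~\ref{prop:3} transfers to the randomized initial sizes $W_0,V_0$ used in Eq.~\eqref{eq:6} --- but this amounts to the same distributional hypothesis already imposed there.
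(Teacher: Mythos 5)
Your proposal is correct in outline and follows the same skeleton as the paper's Appendix~\ref{app:L} argument: the diagonal value $M_c(f,f)=\PP(W_0>V_0)=1/2$, the single-sum representation $M_c(f,g)=\sum_n\PP(W_0W_n(f)>V_0W_n(g),\,T(f,c)=n)$, and a reduction to the machinery of Propositions~\ref{prop:cond} and~\ref{prop:3}. Where you genuinely diverge is in the key inequality. The paper simply asserts, ``from Proposition~\ref{prop:3},'' that $\PP(W_0W_n(f)>V_0W_n(f^{Kelly}),\,T(f,c)=n)<\tfrac12\PP(T(f,c)=n)$ term by term; strictly speaking this requires the version of Proposition~\ref{prop:3} conditioned on the event $\{T(f,c)=n\}$, i.e.\ $\E[W_n(f)/W_n(f^{Kelly})\mid T(f,c)=n]\le 1$, which does not follow from the unconditional statement (conditioning on the hitting time selects paths nonuniformly, and $T$ in Eq.~\eqref{eq:6} even depends on $W_0$). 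Your route --- sum first, bound the total by $\tfrac12\,\E[W_{T}(f)/W_{T}(f^{Kelly})]$, and control that expectation by optional stopping of the nonnegative supermartingale $W_n(f)/W_n(f^{Kelly})$ (supermartingale precisely by Proposition~\ref{prop:cond}) --- is cleaner and actually repairs this gap, at the price of the tail/$T=\infty$ issues you correctly flag. One small slip: under the randomization used in Proposition~\ref{prop:3}, $\PP(aW_0>V_0)=a/2$ holds with \emph{equality} for $a\le 1$ and is strict only for $a>1$, so strictness of $M_c(f,f^{Kelly})<1/2$ is not automatic ``whenever $a\ne1$''; it must come either from a positive-probability set of paths with ratio $>1$ at time $T$, or from $\E[W_T(f)/W_T(f^{Kelly})]<1$ in the strict-supermartingale case --- exactly the bookkeeping your ``does the Fatou step dilute strictness'' caveat anticipates, and which neither your sketch nor the paper fully discharges.
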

\begin{proof}
First we note that
\[
\begin{split}
M_c(f,g) &= \sum_{n=1}^{\infty}  \PP(T(g,c)>n, T(f,c) = n) + \PP(T(g,c)=n, W_0W_n(f) > V_0W_n(g), T(f,c) = n)\\
&= \sum_{n=1}^{\infty}  \PP(V_0W_n(g) < c, T(f,c) = n) + \PP(V_0W_n(g) \ge c, W_0W_n(f) > V_0W_n(g), T(f,c) = n)\\
&= \sum_{n=1}^{\infty}  \PP(W_0W_n(f) > V_0W_n(g), T(f,c) = n).\\
\end{split}
\]
Then, from Propposition~\ref{prop:3} we have 
\[
M_c(f,f) = \sum_{n=1}^{\infty}  \PP(W_0 > V_0, T(f,c) = n) = \PP(W_0 > V_0) = \frac{1}{2} \quad \forall f
\]
and 

\[
M_c(f,f^{Kelly}) < \frac{1}{2} \sum_{n=1}^{\infty}  \PP(T(f,c) = n) = \frac{1}{2} \quad \forall f.
\]
Therefore $(f^{Kelly},f^{Kelly})$ is a strict Nash equilibrium. 
\end{proof}
\begin{prop}
$(f^{Kelly},f^{Kelly})$ is the unique Nash equilibrium.
\end{prop}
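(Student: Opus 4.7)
The plan is to argue by contradiction, mirroring the strategy of Proposition~\ref{prop:4} (which handled the payoff $M_n(f,g)$). Suppose there exists a second Nash equilibrium $(f_0,g_0)\ne (f^{Kelly},f^{Kelly})$; without loss of generality, assume $g_0\ne f^{Kelly}$. The Nash conditions give $M_c(f_0,g_0)\ge M_c(f^{Kelly},g_0)$ and $M_c(g_0,f_0)\ge M_c(f^{Kelly},f_0)$. The goal is to show that both right-hand sides can be bounded below by $1/2$ (with at least one strictly), and simultaneously that $M_c(f_0,g_0)+M_c(g_0,f_0)\le 1$, which produces the contradiction.

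The key intermediate fact to establish is the zero-sum identity $M_c(f,g)+M_c(g,f)=1$ for all $f,g$ in the no-extinction setting. Using the reformulation from the previous proposition,
\[
M_c(f,g) = \PP(T(f,c)<T(g,c)) + \PP\bigl(T(f,c)=T(g,c),\ W_0W_{T(f,c)}(f)>V_0W_{T(g,c)}(g)\bigr),
\]
adding this to $M_c(g,f)$ partitions the sample space into the three events $\{T(f,c)<T(g,c)\}$, $\{T(g,c)<T(f,c)\}$, and $\{T(f,c)=T(g,c)\}$. On the last event the continuous randomizations $W_0,V_0$ ensure that $\PP(W_0W_{T(f,c)}(f)=V_0W_{T(g,c)}(g))=0$, so the two tie-breakers cover it up to a null set. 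Provided $T(f,c)<\infty$ and $T(g,c)<\infty$ almost surely, which holds whenever both strategies have positive asymptotic growth so that $W_n(f),W_n(g)\to\infty$ a.s., the three events sum to a full-measure set and $M_c(f,g)+M_c(g,f)=1$.

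With this identity in hand, the contradiction follows quickly. From the previous proposition $M_c(f,f^{Kelly})<1/2$ strictly for every $f\ne f^{Kelly}$, so by the identity $M_c(f^{Kelly},f)>1/2$ strictly for every $f\ne f^{Kelly}$. Since $g_0\ne f^{Kelly}$, this yields $M_c(f_0,g_0)\ge M_c(f^{Kelly},g_0)>1/2$. For the other inequality, split into cases: if $f_0\ne f^{Kelly}$, the same argument gives $M_c(g_0,f_0)\ge M_c(f^{Kelly},f_0)>1/2$ and the two strict inequalities force $M_c(f_0,g_0)+M_c(g_0,f_0)>1$, contradicting the identity; if $f_0=f^{Kelly}$, then $M_c(g_0,f_0)\ge M_c(f^{Kelly},f^{Kelly})=1/2$, and combined with the strict lower bound on $M_c(f_0,g_0)$ the sum again exceeds $1$.

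The main obstacle is the zero-sum identity, specifically the clean treatment of the event $\{T(f,c)=T(g,c)\}$ and the handling of possibly infinite hitting times. The tie-breaking in the defining equation \eqref{eq:6} is designed precisely so that this event is covered symmetrically, and the $a.s.$ finiteness of $T(\cdot,c)$ under positive growth disposes of the $\infty$ case; still, this is where the argument needs to be written carefully, whereas the chaining of Nash inequalities is then essentially automatic given the strict bound inherited from the preceding proposition.
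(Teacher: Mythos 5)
Your proposal is correct and follows essentially the same route as the paper: establish the zero-sum identity $M_c(f,g)+M_c(g,f)=1$ and then chain the Nash inequalities exactly as in the uniqueness proof for $M_n$, reaching $M_c(f_0,g_0)+M_c(g_0,f_0)>1$ as a contradiction. Your derivation of the identity (partitioning on $\{T(f,c)<T(g,c)\}$, $\{T(g,c)<T(f,c)\}$, $\{T(f,c)=T(g,c)\}$ and flagging the a.s.\ finiteness of the hitting times) is if anything slightly more careful than the paper's one-line computation via the series representation $M_c(f,g)=\sum_{n}\PP(W_0W_n(f)>V_0W_n(g),\,T(f,c)=n)$.
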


\begin{proof}
We first note that for all $f,g$
\[
\begin{split}
M_c(f,g) +  M_c(g,f)&=\sum_{n=1}^{\infty}  \Big(\PP(W_0W_n(f) > V_0W_n(g) + \PP(W_0W_n(f) < V_0W_n(g)\Big), T(f,c) = n)\\
&=  \sum_{n=1}^{\infty}  \PP(T(f,c) = n) = 1.
\end{split}
\]
The left hand side is similar to the proof in Propposition~\ref{prop:4}. 
\end{proof}
\noindent
It is worthwhile here to highlight a link between the this payoff and $M_n(f,g)$. Formally, $M_c(f,g)$ can be rewritten as a convex linear combination of $M_n(f,g)$: 
\[
M_c(f,g) = \sum_{n=0}^{\infty} P( W_0 W_n(f) > V_0 W_n(g),   T(f,c) = n ).
\] 
This has a straightforward interpretation: for each event $(T(f,c) = n)$, [a] the event $(T(f,c) < T(g,c))$ is equivalent to the event $(T(g,c) > n)$ or $(V_0 W_n(g) < c <= W_0 W_n(f))$, and [b] the event $(T(f,c) = T(g,c), W_0 W_{T(f,c)} > V_0 W_{T(g,c)}(g))$ is equivalent to the event $(c <= V_0 W_n(g) < W_0 W_n(f))$. Consequently the combination of the two events $(T(f,c) < T(g,c))$ and $(T(f,c) = T(g,c), W_0 W_{T(f,c)} > V_0 W_{T(g,c)}(g))$ is equivalent to the event $(W_0 W_n(f) > V_0 W_n(g))$.

\section{The probability payoff matrix converges with horizon $n$ to the expected log matrix}\label{app:M}
\begin{prop}
For any pair $(f,g)$ with $G(f)\ne G(g)$, we have 
\[
M_{\infty}(f,g) := \lim_{n\to \infty} M_n(f,g) = \begin{cases}
1, \text{ if } G(f) >G(g)\\
0, \text{ if } G(f) < G(g).
\end{cases}
\]
\end{prop}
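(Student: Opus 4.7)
The plan is to reduce the probability to the almost-sure limit of a normalized log-ratio, and then extract the limit of the probability itself by bounded convergence. Concretely, I would write
\[
M_n(f,g) \;=\; \PP\bigl(R_n > 0\bigr), \qquad R_n \;:=\; \tfrac{1}{n}\bigl(\log W_0 - \log V_0\bigr) \;+\; \tfrac{1}{n}\bigl(\log W_n(f) - \log W_n(g)\bigr),
\]
which is legitimate because $W_0,V_0>0$ a.s.\ and $W_n(f),W_n(g)>0$ along every environmental sample path.

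Next I would identify the almost-sure limit of $R_n$. The two initial-size terms $\tfrac{1}{n}\log W_0$ and $\tfrac{1}{n}\log V_0$ vanish a.s., since $W_0$ and $V_0$ are fixed positive random variables independent of $n$, and dividing a pathwise-finite number by $n$ yields $0$. The remaining pair is handled by the very same SLLN decomposition already carried out in Appendix~\ref{app:A} (and underlying the finite-last-intersection argument of Appendix~\ref{app:D}): $\tfrac{1}{n}\log W_n(f) \to \log G(f)$ and $\tfrac{1}{n}\log W_n(g) \to \log G(g)$ almost surely. Hence $R_n \longrightarrow L := \log G(f) - \log G(g)$ almost surely, and $L\neq 0$ by hypothesis.

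To conclude, I would apply bounded convergence to the indicator $\mathbf{1}_{\{R_n>0\}}$. If $L > 0$, then on a full-probability set there is a (sample-dependent) $N(\omega)<\infty$ with $R_n(\omega)>0$ for every $n\ge N(\omega)$, so $\mathbf{1}_{\{R_n>0\}}\to 1$ a.s.; since the indicator is bounded by $1$,
\[
M_n(f,g) \;=\; \E\bigl[\mathbf{1}_{\{R_n>0\}}\bigr] \;\longrightarrow\; 1.
\]
The symmetric argument when $L<0$ gives $M_n(f,g)\to 0$, which is also the probabilistic counterpart of the deterministic dominance shown in Appendix~\ref{app:D}.

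The main obstacle is essentially cosmetic rather than genuine. The only delicate point is verifying $\tfrac{1}{n}\log W_0\to 0$ a.s.\ without any moment hypothesis on the initial randomization; this is automatic once a sample point with $W_0(\omega)\in(0,\infty)$ is fixed, since a finite real number divided by $n$ tends to $0$, and similarly for $V_0$. A secondary conceptual point is that the a.s.\ convergence of $\tfrac{1}{n}\log\bigl(W_n(f)/W_n(g)\bigr)$ holds on the environmental product space and the initial-size randomization is independent of it, so there is no measurable-selection subtlety when exchanging the a.s.\ limit of the integrand for the limit of the integral. Beyond these two checks the argument is pure SLLN plus bounded convergence and introduces no new machinery.
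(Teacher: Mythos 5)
Your proof is correct, but it takes a different route from the paper's. The paper argues for a \emph{uniform} pathwise estimate: invoking the compactness argument of Appendix~\ref{app:D}, it claims an $n_0$ such that for \emph{all} $n\ge n_0$ and \emph{all} environmental sequences $\x$ the normalized log-ratio exceeds $\eps/4$, and concludes the stronger statement that $M_n(f,g)=1$ exactly for every $n\ge n_0$ (not merely in the limit). That version implicitly needs the initial randomizations to satisfy $1\le W_0, V_0\le$ const so that $0<\frac1n\log W_0<\eps/4$ uniformly, and its ``for all $\x$'' convergence claim is really only an almost-sure one (constant environmental sequences such as $\x=(0,0,\dots)$ give $\frac1n\log W_n(f,\x)=\log\overline{o}_2(f)$, which need not be near $\log G(f)$; these paths form a null set). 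Your argument --- SLLN for the environmental part, the trivial $\frac1n\log W_0\to0$ pathwise for the initial sizes, then bounded convergence applied to $\mathbf{1}_{\{R_n>0\}}$ --- proves exactly the stated limit under weaker hypotheses (a.s.\ finiteness and positivity of $W_0,V_0$, no uniformity over sample paths), at the cost of not yielding the finite-horizon exactness $M_n=1$ for $n\ge n_0$ that the paper's (stronger, but more fragile) uniform estimate would deliver. Both are legitimate; yours is the more robust reading of what the proposition actually asserts.
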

\begin{proof}
If $G(f)-G(g) = \eps>0$, then by a similar argumentation as Appendix~\ref{app:D}, there exists $n_0 < \infty$ such that for all $n\ge n_0$ and all $\x$
\[
0< \frac{1}{n} \log W_0 < \frac{\eps}{4}, \quad  0< \frac{1}{n} \log V_0 < \frac{\eps}{4}, \quad 
 \frac{1}{n} \log W_n(f,\x) > G(f) - \frac{\eps}{4}, \quad
 \frac{1}{n} \log W_n(g,\x) < G(g) + \frac{\eps}{4}.
\]
Therefore, for all $n\ge n_0$ and all $\x$
\[
\frac{1}{n} \log W_0 + \frac{1}{n} \log W_n(f,\x) - \frac{1}{n} \log V_0 - \frac{1}{n} \log W_n(g,\x) > \frac{\eps}{4}>0.
\]
This implies that 
\[M_n(f,g) = \PP\Bigg(\frac{1}{n} \log W_0 + \frac{1}{n} \log W_n(f)> \frac{1}{n} \log V_0 + \frac{1}{n} \log W_n(g)\Bigg) = 1 \quad \text{ for all $n\ge n_0$. }\]
Therefore, $M_{\infty}(f,g) = 1$. Similarly we obtain $M_{\infty}(f,g) = 0$ if $G(f)<G(g)$. 
\end{proof}

\begin{rem}
For the case $G(f) = G(g)$ there are only two cases, $g=f$ or $g=\hat f$. If $g=f$ we have $M_{\infty}(f,f) = \frac{1}{2}$. If $g=\hat f$ we do not know the value of $M_{\infty}(f,\hat f)$. 
\end{rem}
\noindent
See Fig.~\ref{fig:AM} for a graphical illustration of the convergence. \\

\noindent
\hspace*{3.5cm}{\bf\large A} \hspace*{4cm} {\bf\large B}
\begin{figure}[h]
\centering\includegraphics[width=4.5cm]{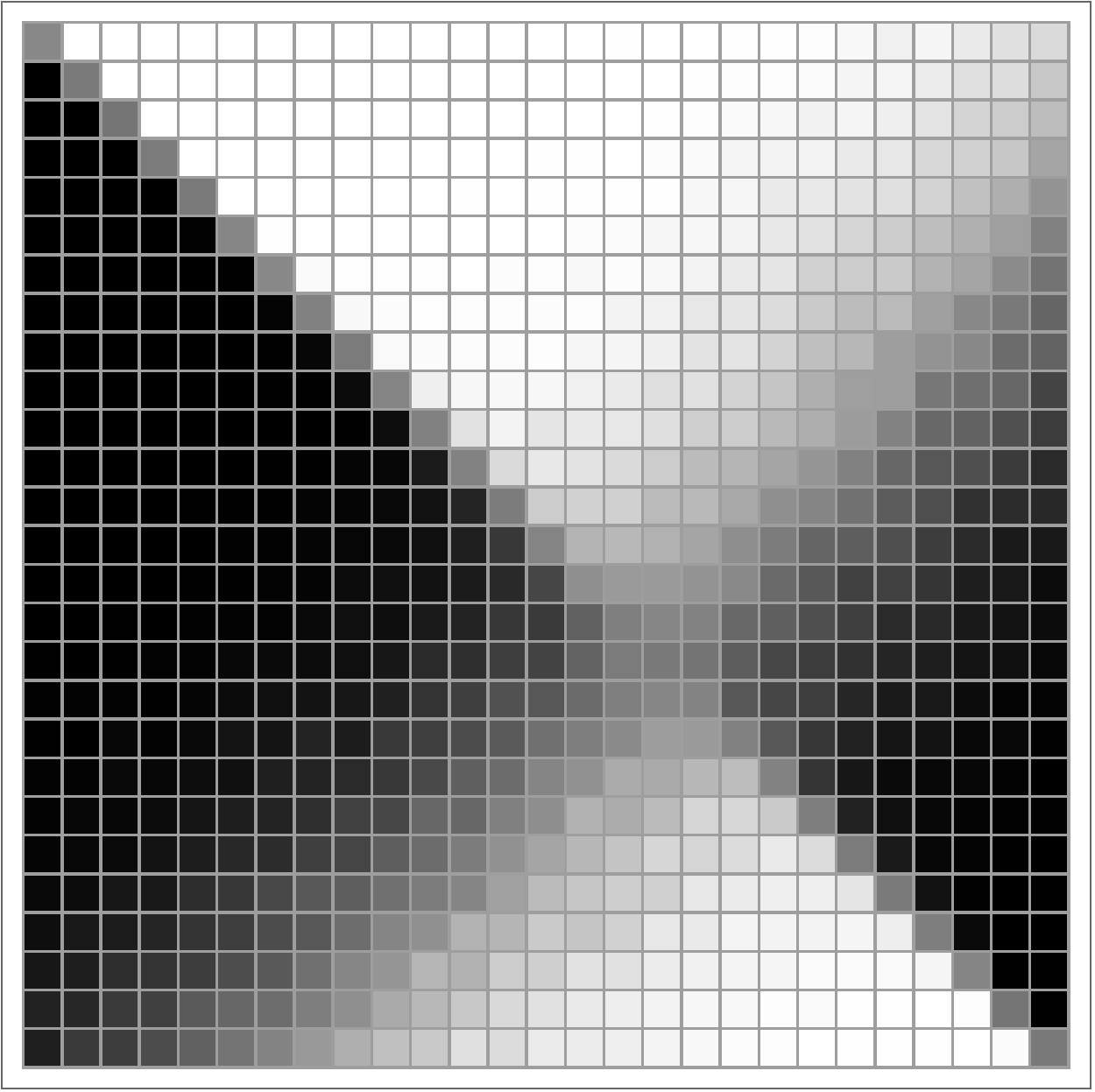}
\centering\includegraphics[width=4.5cm]{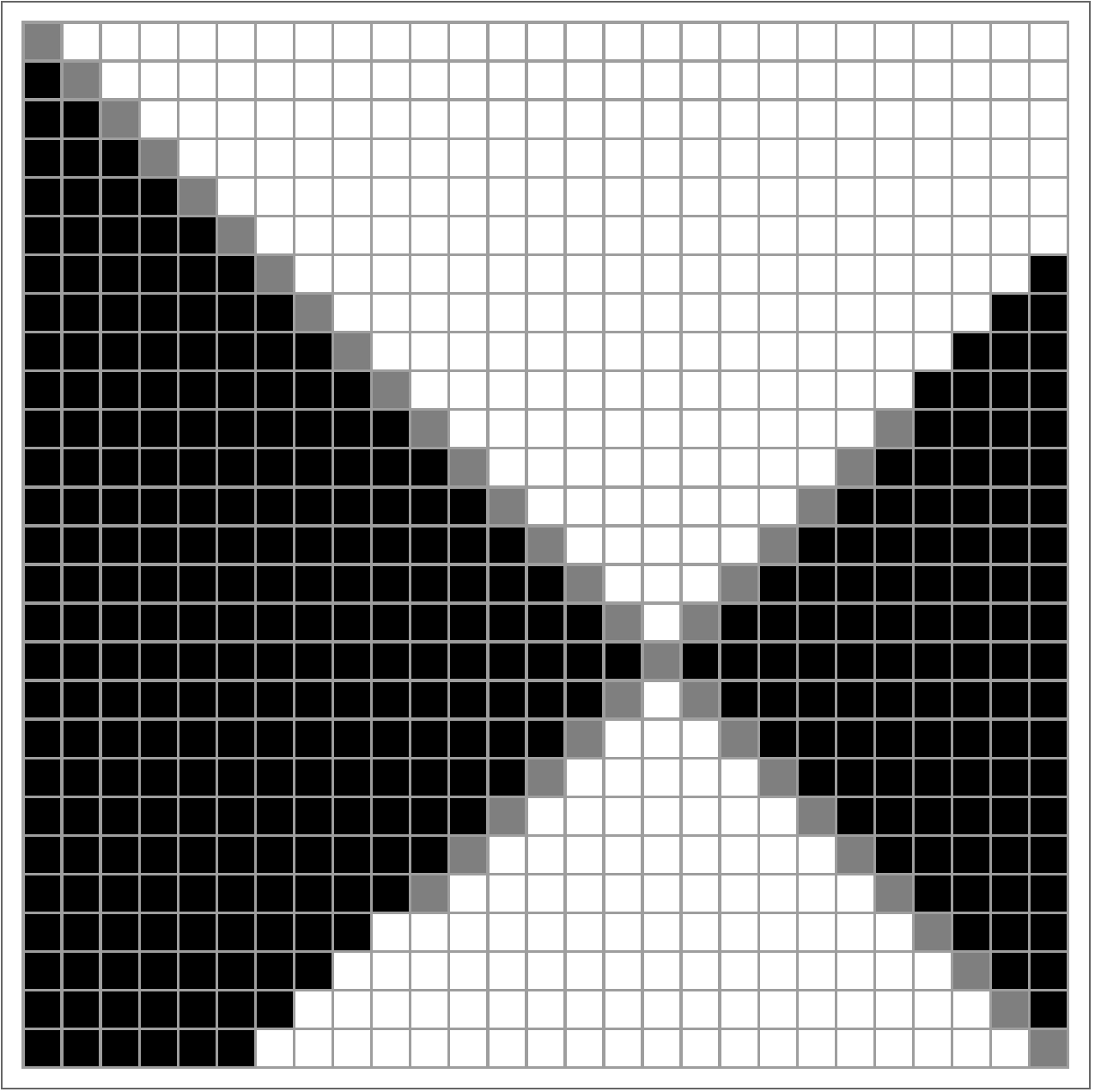}
\caption{{\small A payoff matrix resulting from a very large time horizon $n$ (for primary lineage). With increasing horizon $n$ values become highly contrasted with off-diagonal cells near $0$ or $1$ (excepting those corresponding to strategies that result in equal asymptotic growth rates, here on the ridge perpendicular to the diagonal). | B: A payoff matrix with entries based on pairwise differences in $E[\log W]$, reflecting standard Kelly asymptotic growth-optimality.}}
\label{fig:AM}
\end{figure}

\section{Nash equilibrium in population size $N$}\label{app:N}
In this section, we show that the Nash solution in population size $N$, denoted by $f^*_N$, will be the strategy closest to Kelly under the finite resolution regime, and such that it converges asymptotically with $N$ to the Kelly strategy. Denote by $f^*_N$  the closest element to $f^{Kelly}$ in $I_N:=\{0,\frac{1}{N},\ldots, 1\}$, i.e., $f^*_N = \argmin_{f\in I_N}|f-f^{Kelly}|$. We show that $(f^*_N,f^*_N)$ is the Nash solution for the game with strategies defined only on $I_N$. Due to the definition of $f^*_N$, we see that $|f^*_N - f^{Kelly}| \le \frac{1}{N} \to 0$ as $N\to \infty$. To this end, we show that 
$M_n(f^*_N, f^*_N) \ge M_n(f, f^*_N)$ for all $f \in I_N$. Indeed, we have already from \eqref{eq:equalhalf} that $M_n(f^*_N, f^*_N) = \frac{1}{2}$. Moreover,  we have  $p \log \overline{o}_1(f) + (1-p) \log \overline{o}_2(f) < p \log \overline{o}_1(f^*_N) + (1-p) \log \overline{o}_2(f^*_N)$ for all $f\in I_N \setminus \{f^*_N\}$. Therefore there exists $\eps>0$ such that 
\[p \log \frac{\overline{o}_1(f)}{\overline{o}_1(f^*_N)} + (1-p) \log \frac{\overline{o}_2(f)}{\overline{o}_2(f^*_N)} < -\eps\quad \forall f\in I_N \setminus \{f^*_N\}.
\]
Thus, for every $f\in I_N \setminus \{f^*_N\}$ we have $\sum_{s=0}^n \al(s) P(s) < -n\eps$ where $\al(s) := s \log \frac{\overline{o}_1(f)}{\overline{o}_1(f^*_N)} + (n-s) \log \frac{\overline{o}_2(f)}{\overline{o}_2(f^*_N)}$. We assume that $\log W_0$ and $\log V_0$ have the same distribution with $\supp \log W_0 \supset \{\al(0),\ldots,\al(n)\}$ and $|\supp \log W_0| = |\supp \log V_0|  = r > 2n \eps$. Denote by $A_1=\{s: \al(s)<0\}$, $A_2=\{s: \al(s)\ge 0\}$ and $\de = \frac{\frac{1}{2} - \frac{n\eps}{r}}{\frac{3}{4}- \frac{n\eps}{r}} \in (0,\frac{1}{2})$. We have 
\[
\begin{split}
M_n(f,f^*_N) &= \sum_{s=0}^n \PP(\al(s) + \log W_0 > \log V_0) P(s)=\sum_{s\in A_1} \frac{\frac{1}{2}(r+\al(s))^2}{r^2} P(s) + \sum_{s\in A_2} \Bigg(1-\frac{\frac{1}{2}(r-\al(s))^2}{r^2}\Bigg) P(s)\\
&= \frac{1}{2}+\sum_{s\in A_1} \frac{\al(s)}{r}\Big(1+\frac{\al(s)}{2r}\Big)P(s)+\sum_{s\in A_2} \frac{\al(s)}{r}\Big(1-\frac{\al(s)}{2r}\Big)P(s)\\
&= \frac{1}{2}+\sum_{s\in A_1} \frac{\al(s)}{r}\Big(\de+\frac{\al(s)}{2r}\Big)P(s)+\sum_{s\in A_2} \frac{\al(s)}{r}\Big(\de-\frac{\al(s)}{2r}\Big)P(s) + (1-\de) \sum_{s=0}^n  \frac{\al(s)}{r}P(s).
\end{split}
\]
Note that $\frac{\al(s)}{r}\in [-1,0]$ for $s\in A_1$ and $\frac{\al(s)}{r}\in [0,1]$ for $s\in A_2$. Moreover $x(\de + x/2) \le \frac{1}{2}-\de < \frac{1}{2} - \frac{3}{4} \de $ for $x\in [-1,0]$; $x(\de - x/2) \le \frac{\de^2}{2} <  \frac{1}{2} - \frac{3}{4} \de$ for $x\in [-1,0]$. Therefore, for every $f\in I_N \setminus \{f^*_N\}$ we have
\[
\begin{split}
M_n(f,f^*_N) &< \frac{1}{2}+\Big(\frac{1}{2} - \frac{3}{4} \de\Big) \sum_{s=0}^n P(s) + (1-\de) \sum_{s=0}^n  \frac{\al(s)}{r}P(s)<  \frac{1}{2}+\Big(\frac{1}{2} - \frac{3}{4} \de\Big) + (1-\de) \frac{-n\eps}{r} = \frac{1}{2}.
\end{split}
\]

\section{Nash equilibrium in nonstationary environments}\label{app:O}
\begin{prop}\label{prop:Nashnonstationary}
We consider also a game with payoff of players are
\[
M_n(f,g) := \PP(W_n(f) W_0 > W_n(g) V_0), 
\]
where $W_0, V_0$ have the same distribution.
Then, in this game, $(f^{Kelly},f^{Kelly})$ is the unique strict Nash equilibrium.
\end{prop}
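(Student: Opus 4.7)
The plan is to transfer the argument of Propositions~\ref{prop:3} and~\ref{prop:4} in Appendix~\ref{app:K} to the nonstationary environment regime of Appendix~\ref{app:B}. The key structural observation is that, although the per-generation success probabilities $p_r \sim \mathrm{Beta}(\alpha,\beta)$ vary across generations, they are \emph{independent} draws, so the environmental indicators $\epsilon_r$ remain independent even unconditionally. Moreover, the ratio $W_n(f)/W_n(g)$ factors across generations as a product of terms each of which is linear in $p_r$ once the Bernoulli randomness is integrated out; hence integrating against the Beta prior effectively replaces $\{p_r\}$ by the constant $p = \E[p_r] = \alpha/(\alpha+\beta)$ inside any such expectation. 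This is the same reduction Appendix~\ref{app:B} exploits asymptotically, and I expect it to transfer to the finite-$n$ payoff analysis driving the Nash argument.

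The first step is to establish the nonstationary analogue of Proposition~\ref{prop:cond}. Writing $\alpha_i := \overline{o}_i(f)/\overline{o}_i(g)$, conditioning on $\{p_r\}$ first and then integrating against the Beta prior yields $\E[W_n(f)/W_n(g)] = (p\alpha_1 + (1-p)\alpha_2)^n$. Since the piecewise Kelly formula of Appendix~\ref{app:A} depends on the environment distribution only through the value $p$, this immediately gives $\E[W_n(f)/W_n(f^{Kelly})] \le 1$ with equality iff $p \in [p_-, p_+]$, exactly as in the i.i.d.\ case.

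The second step decomposes $M_n(f,g)$ in the spirit of Eq.~\eqref{eq:Mn}, but indexing the sum by the full realization vector $\x \in \{0,1\}^n$ rather than by the count $H$. This is crucial because under a nonstationary regime the marginal weight of a realization with $s$ ones is no longer $\binom{n}{s}p^s(1-p)^{n-s}$ but a Poisson-Binomial/Beta mixture; indexing by $\x$ keeps the weights factorized across generations, so the Beta integration applies term by term. The symmetry $W_0 \stackrel{d}{=} V_0$ still yields $M_n(f,f) = 1/2$ for free, and applying Cauchy's inequality exactly as in Proposition~\ref{prop:3}, combined with the bound just established, gives $M_n(f, f^{Kelly}) < 1/2 = M_n(f^{Kelly}, f^{Kelly})$ for every $f \ne f^{Kelly}$, proving the strict Nash property.

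For uniqueness, I would invoke the argument of Proposition~\ref{prop:4} essentially verbatim: the identity $M_n(f,g) + M_n(g,f) = 1$ of Proposition~\ref{prop:sumone}(i) uses only the exchangeability of $(W_0, V_0)$ and nothing about the environmental law, so any putative second equilibrium $(f_0, g_0)$ with $g_0 \ne f^{Kelly}$ would force $M_n(f_0, g_0) + M_n(g_0, f_0) > 1$, a contradiction. The main obstacle I anticipate is bookkeeping in the Cauchy step once the summation is indexed by the realization $\x$ instead of the count $s$: the sets $A_1, A_2$ used in Eq.~\eqref{eq:Mn} must be re-expressed in terms of $\x$, and one must verify that the AM-GM bound still collapses cleanly to $\tfrac{1}{2}(p\alpha_1 + (1-p)\alpha_2)^n$ after the Beta integration, which by the factorization observation should indeed hold.
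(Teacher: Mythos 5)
Your proposal is correct and follows essentially the same route as the paper, which simply writes $M_n(f,g)=\sum_{H}\PP(\al_1^H\al_2^{n-H}W_0>V_0)P(H)$ with $H$ a generalized binomial mixed over the Beta prior and declares the rest "similar to Proposition~\ref{prop:3}"; your key reduction $\E[W_n(f)/W_n(g)]=(p\al_1+(1-p)\al_2)^n$ with $p=\al/(\al+\beta)$ is exactly the fact that makes that omitted step work, and the termwise AM--GM bound plus the sum-to-one uniqueness argument transfer as you describe (whether one indexes by $\x$ or by the count $H$ is immaterial, since the bound collapses to $\tfrac{1}{2}\E[W_n(f)/W_n(f^{Kelly})]$ either way).
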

\begin{proof}
We note that in the non-stationary case we have
\[
\begin{split}
M_n(f,g) &=  \sum_{H=0}^n \PP(\al_1^H \al_2^{n-H} W_0 > V_0) P(H),\\
\end{split}
\]
where $H\sim GB\big(n,\{p_1,\ldots, p_n\}\sim Beta(\al,\beta)\big)$ is a generalized binomial distribution. Therefore the proof is similar to the proof in Proposition~\ref{prop:3} and is omitted.
\end{proof}

\section{Limit of the extinction rate}\label{app:P}

\begin{prop} Denote by 
\[
Q_{n,d}(f) := \PP(W_0 W_1(f) > d, \ldots,  W_0 W_n(f)> d).
\]
the probability that the extinction does not occur until time $n$ and $P_{n,d}(f) = 1-Q_{n,d}(f)$ the probability of extinction until time $n$ (also see Fig.~\ref{fig:6}). We prove that
\[
\lim_{n\to \infty}P_{n,d}(f) =  \begin{cases}
0,& \text{ if } \overline{o}_1(f), \overline{o}_2(f)>1\\
1,& \text{ if } \overline{o}_1(f), \overline{o}_2(f)<1\\
c_d(f) \in [0,1],& \text{ else.}
\end{cases}
\]
\end{prop}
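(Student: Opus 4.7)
The plan is to exploit monotonicity of the extinction probability in $n$ to establish existence of the limit in full generality, and then in the two extreme cases pin down the value $0$ or $1$ using the deterministic sign structure of the per-step factors $\overline{o}_1(f),\overline{o}_2(f)$.

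First I would note that the events $A_n := \{\exists\, k \in \{1,\ldots,n\}:\, W_0 W_k(f) \le d\}$ are nested and increasing in $n$, so $P_{n,d}(f) = \PP(A_n)$ is non-decreasing and bounded by $1$. Monotone convergence immediately yields
\[
\lim_{n\to\infty} P_{n,d}(f) \;=\; \PP\Big(\bigcup_{n\ge 1} A_n\Big) \;=\; \PP(\exists\, k\ge 1:\, W_0 W_k(f)\le d) \;=:\; c_d(f) \;\in\; [0,1],
\]
which is exactly the content of the ``else'' case and also shows that the limit always exists. For Case~1 ($\overline{o}_1(f),\overline{o}_2(f) > 1$), every sample trajectory is strictly increasing in $n$, so $W_0 W_k(f) \ge W_0 W_1(f) \ge W_0\, \min(\overline{o}_1(f),\overline{o}_2(f))$ almost surely. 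Under the standing assumption that the initial randomization $W_0$ is supported above the extinction threshold $d$ (otherwise the lineage begins extinct, which is the trivial situation the model implicitly excludes), this lower bound exceeds $d$ for every $k$, so $P_{n,d}(f) \equiv 0$. For Case~2 ($\overline{o}_1(f),\overline{o}_2(f) < 1$), the strong law of large numbers gives
\[
\tfrac{1}{n}\log W_n(f) \;\longrightarrow\; p\log\overline{o}_1(f)+(1-p)\log\overline{o}_2(f) \;<\; 0 \quad \text{a.s.,}
\]
hence $W_n(f) \to 0$ almost surely, and for any fixed $W_0$ the indicator $\mathbf{1}_{A_n}$ converges to $1$ on an event of full measure; dominated convergence then yields $P_{n,d}(f) \to 1$.

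The main obstacle is really one of careful book-keeping rather than of deep probability: in Case~1 one must verify that the support condition on $W_0$ is strong enough so that $W_0\,\min(\overline{o}_1(f),\overline{o}_2(f)) > d$ almost surely, which is what makes $P_{n,d}(f)$ identically zero rather than a positive constant. The ``else'' regime in fact hides a finer classification --- when exactly one of the $\overline{o}_j(f)$ exceeds $1$, the sign of the drift $p\log\overline{o}_1(f)+(1-p)\log\overline{o}_2(f)$ determines whether $c_d(f)$ sits strictly in $(0,1)$ (positive drift, classical ruin-theoretic first-passage regime for a random walk) or equals $1$ (non-positive drift, recurrent or negatively drifting walk). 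The proposition only claims $c_d(f)\in[0,1]$, so this sharper dichotomy is not required, and the weaker statement follows for free from the monotone-convergence argument above.
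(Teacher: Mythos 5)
Your proposal is correct and follows essentially the same route as the paper's Appendix~P proof: monotonicity of the extinction events gives existence of the limit in the mixed case, and the deterministic per-step bounds $\overline{o}_j(f)>1$ (resp.\ $<1$) force the limit to be $0$ (resp.\ $1$), with both arguments resting on the same implicit assumption $\PP(W_0>d)=1$, which you commendably make explicit. The only cosmetic differences are that the paper tracks the running minimum via an explicit function $\beta_{n,d}$ of the environment sequence rather than the union of events $A_n$, and handles the case $\overline{o}_1(f),\overline{o}_2(f)<1$ by the deterministic decay $W_n\le\max(\overline{o}_1,\overline{o}_2)^n\to 0$ where you invoke the strong law of large numbers.
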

\begin{proof}
For the sake of simplicity, we denote by 
\[\beta_{n,d}(x_1,\ldots,x_n):= \frac{d}{\overline{o}_1(f)^{x_1}\overline{o}_2(f)^{1-x_1}} \vee \cdots \vee \frac{d}{\overline{o}_1(f)^{x_1+\cdots+x_n}\overline{o}_2(f)^{n-x_1-\cdots-x_n}}.
\]
Then we rewrite the formula 
\[
\begin{split}
Q_{n,d}(f) &= \sum_{x_1,\ldots,x_n} P(x_1,\ldots,x_n) \PP\big(W_0 > \beta_{n,d}(x_1,\ldots,x_n) \big).
\end{split}
\]
\begin{enumerate}
\item[(i)] If $\overline{o}_1(f), \overline{o}_2(f)>1$: we have $\beta_{n,d}(x_1,\ldots,x_{n-1},1) = \beta_{n,d}(x_1,\ldots,x_{n-1},0) = \beta_{n-1,d}(x_1,\ldots,x_{n-1})$, therefore  $Q_{n,d}(f) = Q_{n-1,d}(f) =\cdots = Q_{0,d}= \PP(W_0>d) = 1$ for all $n$. Therefore $\lim_{n\to \infty}P_{n,d}(f) =  0$.
\item[(ii)] If $\overline{o}_1(f), \overline{o}_2(f)<1$: we have $\beta_{n,d}(x_1,\ldots,x_{n}) = \frac{d}{\overline{o}_1(f)^{x_1+\cdots+x_n}\overline{o}_2(f)^{n-x_1-\cdots-x_n}}$ which approaches infinity with $n$. Therefore for $n$ large enough, $Q_{n,d} = 0$. This implies $\lim_{n\to \infty}P_{n,d}(f) =  1$.
\item[(iii)] If $\overline{o}_1(f) > 1>  \overline{o}_2(f)$: we have $\beta_{n,d}(x_1,\ldots,x_{n-1},1) = \beta_{n-1,d}(x_1,\ldots,x_{n-1})$. Note that $Q_{n,d}$ is decreasing and bounded below by $0$, therefore there exists the limit of $Q_{n,d}(f)$ which implies the limit of $P_{n,d}(f)$.
\end{enumerate}
\end{proof}
\begin{rem}
$c_d(f)$ is increasing with $d$ (see Fig.~\ref{fig:6}). 
\end{rem}
\begin{proof}
If $d_1>d_2$ then $\beta_{n,d_1}(x_1,\ldots,x_n) > \beta_{n,d_2}(x_1,\ldots,x_n)$ therefore $Q_{n,d_1}(f) < Q_{n,d_2}(f)$ which implies that $c_{d_1}(f) \ge c_{d_2}(f)$. 
\end{proof}

\end{appendices}

\newpage
\bibliographystyle{apalike}

\end{document}